\newtheorem{theorem}{Theorem}[section]
\newtheorem{lemma}[theorem]{Lemma}
\newtheorem{proposition}[theorem]{Proposition}
\newcommand{\ra}{\rightarrow}
\newcommand{\smallavg}[1]{\langle #1 \rangle}
\newcommand{\nto}{\mbox{$\;\longrightarrow_{\hspace*{-0.42cm}{\small n}}\;$~}}
\newtheorem{thm1}{Theorem} 
\begin{document}
\title[Ultrametricity Property for the RFIM]
{{On the Ultrametricity Property in Random Field Ising Models
}}
\author{J. Roldan}
\address{
\newline
Departamento de Matem\'atica -
Universidade de Bras\'{i}lia, Brazil,
Email: \textup{\tt jinsupe10000@gmail.com}
}
%
%
\author{R. Vila}
\address{
\newline 
Departamento de Estat\'istica -
Universidade de Bras\'ilia, Brazil,
Email: \textup{\tt rovig161@gmail.com}
}

\date{\today}

\keywords{Ghirlanda-Guerra Identities $\cdot$ Random Field Ising Model $\cdot$ Replica symmetry $\cdot$ Ultrametricity.}
\subjclass[2010]{MSC 82B20, MSC 82B44, MSC 60K35.}

\begin{abstract}
In this paper we show that the ultrametricity property remains valid in random field Ising models
with independent disorder 
whenever
the field strength is a small perturbation.
\end{abstract}

\maketitle
\section{Introduction}
\label{sec-1}
In statistical mechanics,
the random field Ising model (RFIM) \cite{imry1975random, Anatolyimry1970} 
is considered 
one of the simplest non-trivial models that belongs to a
class of disordered systems in which the disorder is
coupled to the order parameter of the system. 
This model is under intensive investigation 
and has been studied from several aspects. For example,
it is expected that many properties, as the Parisi ultrametricity 
(see \cite{Panchenkoultra, parisi1980sequence}) and the (extended) 
Ghirlanda-Guerra identities
(see \cite{aizenman1998stability, ghirlanda1998general,panchenko2007}), 
in disordered spin models 
should not depend on the particular distribution of the coupling constants.  
These properties are known to hold
in several mean-field spin glass models, such as the Sherrington-Kirkpatrick
model \cite{PhysRevLett.35.1792} and generic mixed $p$-spin models, 
see, e.g., \cite{Auffinger}.
The ultrametricity property was predicted by Parisi in \cite{parisi1980sequence} 
as an attempt to describe the expected behavior of the model and it still remains 
an unsolved mathematical problem.
On the other hand, in \cite{ghirlanda1998general} it was proven rigorously that
the Ghirlanda-Guerra identities hold (in the infinite volume limit)
in some approximate sense; for some specific choice of 
perturbed parameters \cite{Talagrand2010}. 
Results involving the ultrametricity property in spin glass models can be 
found, for example, in 
\cite{Auffinger, panchenko2010, panchenko2011ghirlanda, Panchenkoultra, book-talagrand}.

The main goal of this paper is to remove the hypothesis of Gaussian 
disorder and to show that the 
Parisi ultrametricity is
valid in RFIMs under mild assumptions on the disorder.
To this end, results similar to the ones in Chatterjee (2015) \cite{chatterjee2015disorder} 
are proven for the RFIM with independent disorder in the case that the field 
strength is a small perturbation. 
Afterward, we added to the random field  asymptotically vanishing 
non-Gaussian perturbations, and for this generalized model, 
we derived the extended Ghirlanda-Guerra identities, 
for the random models containing $p$-spin terms, for all $p\geqslant 1$. 
These identities combined with the main theorem of Panchenko (2011)
\cite{Panchenkoultra} establishes ultrametricity.

This paper is organized as follows. In Section \ref{sec:3},
we present the random models and state
our main result. Sections \ref{proof} and \ref{proof-2}, 
are dedicated to prove this paper's main result. 
In Appendix, we provide a proof for the integration by parts formula 
used in this paper. 


\section{Statement of the result}
\label{sec:3}	
Given $n\geqslant 1$, let 
$V_n=\mathbb{Z}^d\cap[1,n]^d$, $d\geqslant 1$, be a finite subset of vertices of 
$d$-dimensional hypercubic lattice 
with cardinality denoted by $|V_n|$.
The (random) Gibbs measure of the ferromagnetic RFIM on the set of spin 
configurations  
$\{\pm 1\}^{V_n}$  is given by
\begin{align}\label{gibbs-measure2}
{G}(\{\sigma\})
=
{1\over Z}
\exp\Bigg(
\beta \sum_{\langle xy\rangle}\sigma_x \sigma_y + (\mu-h)\sum_{x\in V_n} g_x \sigma_x
\Bigg)\,,
\end{align}
where $\langle xy\rangle$ denotes the set of ordered pairs in $V_n$ of 
nearest neighbors, $\beta>0$ and 
$\mu-h>0$ (with $\mu>0$),
called inverse temperature and field strength, respectively.
The partition function $Z$ appearing in the definition of 
${G}$ is a normalizing factor and 
$g_x$'s are
independent random variables 
(that collectively are called the disorder)
with 
zero-mean and unit-variance. Additionally, we assumed that  
the field strength is a small perturbation with the following decay rate, 
\begin{align}\label{condition-pert} 
h\nto \mu; \quad 
(\mu-h)\sqrt{|V_n|}\nto \infty; \quad 
{1\over |V_n|}\sum_{x\in V_n}
\mathbb{E}\big(|g_x|^{3}:|g_x|\geqslant \varepsilon (\mu-h)^{-1}\big)\nto 0\,,
\end{align}
for any $\varepsilon>0$. RFIMs with presence 
of disorder and field strength satisfying condition \eqref{condition-pert}
appeared in Auffinger and Chen (2016) \cite{Auffinger}. 
In the reference \cite{RV2019} the authors have studied 
the behaviour of RFIM with disorder having similar decay rate 
but different from ours. Additionally, the field strength used 
in \cite{RV2019} remains unchanged with respect to the volume.

Moreover, as in Panchenko (2013) \cite{panchenko2013sherrington} and Talagrand (2011) \cite{book-talagrand}, we add asymptotically vanishing perturbations to the Hamiltonian corresponding to Gibbs measure in \eqref{gibbs-measure2} to define the random Gibbs measure ${G}_{\boldsymbol{\alpha}}$ on $\{\pm 1\}^{V_n}$ with the following perturbing Hamiltonian
\begin{align}\label{gibbs-measure3}
\beta \sum_{\langle xy\rangle}\sigma_x \sigma_y + (\mu-h)\sum_{x\in V_n} g_x \sigma_x + H^{\rm per}_{n;\boldsymbol{\alpha}}(\sigma)
\,,
\end{align}
with
\begin{align}\label{def_Hper}
&H^{\rm per}_{n;\boldsymbol{\alpha}}(\sigma)
=
c_n\sum_{p\geqslant 2} \alpha_p 2^{-p}
H_{n;p}(\sigma);
\quad
H_{n;p}(\sigma)={1\over\vert V_n\vert^{(p-1)/2}}\sum_{x_1,\ldots,x_p} \xi_{x_1,\ldots,x_p} \sigma_{x_1}\cdots \sigma_{x_p}\,;
\end{align}
where the right-sided summation is over all $(x_1,\ldots,x_p)\in \otimes_{i=2}^p V_n$. 
Here, the sequence of numbers $(c_n)$ is such that $c_n\nto 0$, 
the sequence  $\boldsymbol{\alpha}=(\alpha_p)$ is given and satisfies 
$\vert\alpha_p\vert\leqslant 1$, 
and the disorder $(\xi_{x_1,\ldots,x_p})$ 
consists of i.i.d. real-valued random variables $\xi_{x_1,\ldots,x_p}$, for 
$p\geqslant 2$, with zero-mean and  unit-variance.
Note that when $\boldsymbol{\alpha}=\boldsymbol{0}$, ${G}_{\boldsymbol{\alpha}}(\sigma)={G}(\sigma)$ for all $\sigma\in\{\pm 1\}^{V_n}$.

For a function $f:(\{\pm 1\}^{V_n})^m\to\mathbb{R}$, $m\geqslant 1$, we define
\begin{align}\label{en-int}
\langle \,f\,\rangle_{\boldsymbol{\alpha}}
=
\langle f(\sigma^1,\ldots,\sigma^m)  \rangle_{\boldsymbol{\alpha}}
&=
\int f(\sigma^1,\ldots,\sigma^m)  \, \text{d}{G}_{\boldsymbol{\alpha}}(\sigma^1)\cdots \text{d}{G}_{\boldsymbol{\alpha}}(\sigma^m)
\,.
\end{align}
The randomness of the $g_x$'s and $\xi_{x_1,\ldots,x_p}$'s
will be represented by the measure 
$\gamma$ on $\mathbb{R}^{V_n}\times \mathbb{R}^{\otimes_{i\geqslant 2} V_n}$.
Following the notation of Talagrand (2003) \cite{talagrand2003spin}, we write 
\[
\nu_{\boldsymbol{\alpha}}(f)
\coloneqq 
\mathbb{E}\langle \,f\,\rangle_{\boldsymbol{\alpha}}
=
\int \langle \,f\,\rangle_{\boldsymbol{\alpha};g=u, \xi=v} \, \text{d}\gamma(u,v)
\,,
\]
averaging on the realizations of the disorder, where
$\langle\,\cdot\,\rangle_{\boldsymbol{\alpha};g=u,\xi=v}$ is 
the Gibbs expectation defined by setting 
$g_x$ and $\xi_{x_1,\ldots,x_p}$ in $\langle\,\cdot\,\rangle_{\boldsymbol{\alpha}}$ 
to be $u_x$ and $v_{x_1,\ldots,x_p}$ for each $p$, respectively, for each $x,x_i\in V_n$.
Since $\sqrt{\mathbb{E}H^2_{n;p}(\sigma)}\leqslant \sqrt{\vert V_n\vert}$, where $H_{n,p}$ is as in \eqref{def_Hper}, 
it follows from Lemma $3.6{''}$ of Loève (1951) \cite{Loeve-1951} that the series $c_n\sum_{p\geqslant 2} \alpha_p 2^{-p}
H_{n;p}(\sigma)$ converges almost surely. Therefore the Hamiltonian $c_n \alpha_p 2^{-p}
H_{n;p}(\sigma)$ is well-defined almost surely.

A collection $\sigma^1,\sigma^2,\ldots$ of configurations which 
are independent and identically distributed with respect to the 
Gibbs measure \eqref{gibbs-measure3} are known as replicas.
The spin overlap
between two replicas $\sigma^\ell$, $\sigma^{\ell'}$ is defined as
\begin{align}\label{overlap}
R_{\ell,\ell'}
\coloneqq 
{1\over |V_n|}\sum_{x\in V_n} 
\sigma^\ell_x \sigma^{\ell'}_x, \quad \forall \ell,\ell'\geqslant 1\,.
\end{align}
Note that $|R_{\ell,\ell'}|\leqslant 1$, $R_{\ell,\ell}=1$ 
and that the
infinite random array $R = (R_{\ell,\ell'})_{\ell,\ell'\geqslant 1}$ is symmetric, 
non-negative definite and weakly exchangeable.  
Following \cite{DSFinetti82}, an infinite random array $R$ with such properties 
is known as 
Gram-de Finetti matrix.
The array $R$ is said
to satisfy the extended Ghirlanda-Guerra identities 
(see \cite{aizenman1998stability,ghirlanda1998general, panchenko2007}) if 
for any $m\geqslant2$, any bounded measurable function $f=f\big((R_{\ell,\ell'})_{1\leqslant \ell,\ell'\leqslant m}\big)$, 
and for any bounded measurable function $\psi:\mathbb{R}\to\mathbb{R}$ of one overlap,
\begin{align}\label{G-G}
\nu_{\boldsymbol{\alpha}}(f \psi(R_{1,m+1}))
- 
\frac{1}{m}  \,
\nu_{\boldsymbol{\alpha}}(f)
\nu_{\boldsymbol{\alpha}}(\psi(R_{1,2}))
- 
\frac{1}{m}
\sum_{\ell=2}^m \nu_{\boldsymbol{\alpha}}(f \psi(R_{1,\ell}))
\nto 0\,.
\end{align}

In order to lighten the notation, we will omit the subscript $\boldsymbol{\alpha}$, 
when $\boldsymbol{\alpha}=\boldsymbol{0}$.

\bigskip

We are finally ready to state our main result.
\begin{thm1}\label{rsbthm}
	Under the assumption \eqref{condition-pert}, the following hold:	
	\begin{itemize}
		\item[\rm 1)] $R_{2,3}\geqslant \min\{R_{1,2},R_{1,3}\}$ almost surely 
		w.r.t. the infinite volume limit of $\nu$;
		\item[\rm 2)] $R_{2,3}\geqslant \min\{R_{1,2},R_{1,3}\}$ almost surely 
		w.r.t. the infinite volume limit of $\nu_{\boldsymbol{\alpha}}$, for all $p\geqslant 1$ in \eqref{def_Hper}.
	\end{itemize}
	Therefore for {\rm RFIMs}, defined by 
	\eqref{gibbs-measure2}-\eqref{condition-pert} and  
	\eqref{condition-pert}-\eqref{def_Hper}, the array $R$ is ultrametric.
\end{thm1}
The major step of the proof of Theorem \ref{rsbthm} requires 
a generalization of the Gaussian integration by parts, 
as in \cite{Auffinger, CARMONA2006215} and \cite{YTchen2019}.

Our first tool will be the following  proposition.
Its proof appears in Auffinger and Chen (2016) \cite{Auffinger}, Lemma 2.2. 
\begin{proposition}\label{prop-ap}
	Let $y$ be a random variable such
	that its first $k\geqslant 2$ moments match those of a Gaussian random variable. Suppose
	that $f\in C^{k+1}(\mathbb{R})$. For any $K\geqslant1$;
	\begin{align*}
	|\mathbb{E}yf(y)-\mathbb{E}f'(y)|
	\leqslant
	\dfrac{2(\|f^{(k-1)}\|_\infty + \|f^{(k)}\|_\infty)}{(k-1)!} \,
	\mathbb{E}(|y|^k:|y|\geqslant K)
	+
	\dfrac{(k+1)K}{k!} \, \|f^{(k)}\|_\infty \mathbb{E}|y|^k.
	\end{align*}
\end{proposition}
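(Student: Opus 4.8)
The plan is to reduce the claim to a Taylor-remainder estimate by subtracting a polynomial whose expectation is pinned down by the moment hypothesis, and then to control that remainder through the truncation at level $K$. Write $P$ for the degree-$(k-1)$ Taylor polynomial of $f$ at the origin and set $R=f-P$, so that $R'=f'-P'$. The key preliminary observation is that $\Pi(y):=yP(y)-P'(y)$ is a polynomial of degree at most $k$, and for a standard Gaussian $g$ the exact integration-by-parts identity $\mathbb{E}[g\,P(g)]=\mathbb{E}[P'(g)]$ gives $\mathbb{E}\,\Pi(g)=0$; since the first $k$ moments of $y$ coincide with those of $g$, also $\mathbb{E}\,\Pi(y)=0$. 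Subtracting this vanishing quantity from $\mathbb{E}[yf(y)]-\mathbb{E}[f'(y)]$ leaves
\[
\mathbb{E}[yf(y)]-\mathbb{E}[f'(y)]=\mathbb{E}\big[\,yR(y)-R'(y)\,\big],
\]
so that everything is now expressed through the single smooth remainder $R$, which by Taylor's formula can be represented either at order $k-1$ (involving $f^{(k)}$, with $|R(y)|\leqslant \tfrac{\|f^{(k)}\|_\infty}{k!}|y|^{k}$) or, more coarsely, at order $k-2$ (involving $f^{(k-1)}$, with $|R(y)|\leqslant \tfrac{2\|f^{(k-1)}\|_\infty}{(k-1)!}|y|^{k-1}$).

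Next I would split the expectation according to whether $|y|\geqslant K$ or $|y|<K$ and use a different representation on each piece. On the tail $\{|y|\geqslant K\}$ the finer bound would force a factor $|y|^{k+1}$, whose expectation need not even be finite; instead I would use the coarse representation, giving $|yR(y)|\leqslant \tfrac{2\|f^{(k-1)}\|_\infty}{(k-1)!}|y|^{k}$ and $|R'(y)|\leqslant \tfrac{\|f^{(k)}\|_\infty}{(k-1)!}|y|^{k-1}\leqslant \tfrac{\|f^{(k)}\|_\infty}{(k-1)!}|y|^{k}$, the last step invoking $K\geqslant 1$. Summing these contributions bounds the tail by $\tfrac{2\|f^{(k-1)}\|_\infty+\|f^{(k)}\|_\infty}{(k-1)!}\,\mathbb{E}(|y|^{k}:|y|\geqslant K)$, which is dominated by the stated first term $\tfrac{2(\|f^{(k-1)}\|_\infty+\|f^{(k)}\|_\infty)}{(k-1)!}\,\mathbb{E}(|y|^{k}:|y|\geqslant K)$. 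On the bounded region $\{|y|<K\}$ I would instead use the finer representation: $yR(y)$ enters with coefficient $\tfrac1{k!}$ and an extra power $|y|^{k+1}$, which the constraint $|y|<K$ turns into $K|y|^{k}$, while the $R'(y)$ contribution enters with coefficient $\tfrac1{(k-1)!}$; once these are handled together, the coefficients combine as $\tfrac1{k!}+\tfrac1{(k-1)!}=\tfrac{k+1}{k!}$ and yield the second term $\tfrac{(k+1)K}{k!}\|f^{(k)}\|_\infty\,\mathbb{E}|y|^{k}$.

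The step I expect to be the main obstacle is precisely the bounded region $\{|y|<K\}$. Taken pointwise, the combined remainder $yR(y)-R'(y)$ behaves like $|y|^{k-1}$ near the origin—it is dominated there by $R'$—so a naive absolute-value bound would leave an uncontrollable $\mathbb{E}(|y|^{k-1}:|y|<K)$ that does not appear in the statement. The resolution is that these low-order contributions cancel only after the \emph{signed} expectation is taken, which is exactly where the coincidence of the first $k$ moments of $y$ with those of $g$ is used (encoded in $\mathbb{E}\,\Pi(y)=0$ from the first step); this is also why the two parts $yR$ and $R'$ must be kept together rather than estimated separately. The genuine difficulty is therefore the bookkeeping: organizing the truncation so that the single unmatched moment $\mathbb{E}[y^{k+1}]$—which may be infinite—is split into the finite piece $\mathbb{E}(y^{k+1}:|y|<K)\leqslant K\,\mathbb{E}|y|^{k}$ and a tail absorbed into $\mathbb{E}(|y|^{k}:|y|\geqslant K)$ through the coarse $f^{(k-1)}$ bound, while every lower-order term is annihilated by moment matching. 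Once this structure is in place, writing out the integral forms of the Taylor remainders and tracking the constants is routine.
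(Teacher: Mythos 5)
Your overall scheme coincides with the one behind the paper's treatment of this statement: the paper does not reprove Proposition \ref{prop-ap} (it cites Lemma 2.2 of Auffinger--Chen), but its Appendix proves the bivariate generalization, Proposition \ref{prop-int-gen}, by exactly the argument you propose. Indeed, your polynomial $\Pi(y)=yP(y)-P'(y)$ equals $f(0)y+\sum_{n=1}^{k-1}f^{(n)}(0)h_n(y)$ with $h_n(y)=\tfrac{y^{n+1}}{n!}-\tfrac{y^{n-1}}{(n-1)!}$, so discarding it by moment matching is the paper's step $\mathbb{E}h_n(y)=0$; your coarse and fine Taylor representations of $yR(y)-R'(y)$ are the paper's $I_1$ and $I_2$; and your split at $|y|=K$ is the paper's decomposition $\mathbb{E}(I_1:D)+\mathbb{E}(I_2:D^{c})$. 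Your tail estimate is correct as written.

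The gap is in your final paragraph, i.e.\ your resolution of the $|y|^{k-1}$ term on $\{|y|<K\}$. You assert that this low-order contribution ``cancels after the signed expectation is taken,'' invoking the moment matching encoded in $\mathbb{E}\,\Pi(y)=0$. That mechanism is not available: the moment hypothesis is entirely spent in discarding $\Pi$, and what remains, $\mathbb{E}[yR(y)-R'(y)]$, is precisely the quantity to be bounded; $\mathbb{E}\big(R'(y):|y|<K\big)$ has no reason to vanish, and keeping $yR$ and $R'$ together yields no pointwise cancellation either, since near the origin $yR(y)=O(|y|^{k+1})$ while $R'(y)$ is genuinely of order $|y|^{k-1}$. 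Fortunately, no cancellation is needed, because the term you call ``uncontrollable'' is controlled by a moment comparison: $\mathbb{E}\big(|y|^{k-1}:|y|<K\big)\leqslant\mathbb{E}|y|^{k-1}\leqslant\big(\mathbb{E}|y|^{k}\big)^{(k-1)/k}\leqslant\mathbb{E}|y|^{k}\leqslant K\,\mathbb{E}|y|^{k}$, where the second inequality is Jensen's, the third uses $\mathbb{E}|y|^{k}\geqslant\big(\mathbb{E}y^{2}\big)^{k/2}=1$ (this is where matching with a \emph{standard} Gaussian, $\mathbb{E}y^{2}=1$, together with $k\geqslant 2$, is genuinely used), and the last uses $K\geqslant 1$. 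Inserting this into your bounded-region estimate, the coefficients combine as $\tfrac{1}{k!}+\tfrac{1}{(k-1)!}=\tfrac{k+1}{k!}$ exactly as you wrote, producing the second term of the statement; the same comparison is used, tacitly, in the paper's Appendix estimates \eqref{eq2} and \eqref{eq3}. So your proof is repairable by a one-line moment inequality, but as written its crucial step rests on a cancellation that does not exist.
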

Our second tool will be the following  proposition. Its proof is presented in Appendix. 
This result is new and can be seen as a generalization of Proposition \ref{prop-ap} for the
bivariate case.
\begin{proposition}
	\label{prop-int-gen}
	Let $x,y$ be two independent random variables such
	that their first $k\geqslant 2$ moments match those of a Gaussian random variable. Suppose
	that $f\in C^{k+2}(\mathbb{R}^2)$. For any $K_1,K_2\geqslant1$;
	\begin{align*} 
	\biggl|&\mathbb{E}xyf(x,y)-\mathbb{E}{\partial^2 f(x,y)\over\partial x\partial y}\biggr|
	\\[0,2cm]
	&\leqslant
	{2\over (k-1)!}
	\left(
	{\biggl\|{\partial^{k-1} f\over \partial y^{k-1}}\biggr\|_{\infty}}
	+
	{\biggl\|{\partial^{k} f\over \partial y^{k}}\biggr\|_{\infty}}
	\right)
	\mathbb{E}(|x|:|x|\geqslant K_1) \,
	\mathbb{E}(|y|^k:|y|\geqslant K_2)
	\\[0,2cm]
	&\quad +
	{2\over (k-1)!}
	\biggl(
	{\biggl\|{\partial^{k} f\over \partial x^{k-1}\partial y}\biggr\|_{\infty}}
	+
	{\biggl\|{\partial^{k+1} f\over \partial x^{k}\partial y}\biggr\|_{\infty}}
	\biggr)
	\mathbb{E}(|x|^k:|x|\geqslant K_2)
	\\[0,2cm]
	&\quad +
	{2(k+1)K_1\over k!}
	\left(
	K_2
	{\biggl\|{\partial^{k} f\over \partial y^{k}}\biggr\|_{\infty}} 
	\mathbb{E}|y|^{k}
	+
	{\biggl\|{\partial^{k+1} f\over \partial x^{k}\partial y}\biggr\|_{\infty}}
	\mathbb{E}|x|^k
	\right) \nonumber
	\\[0,2cm]
	&\quad +
	{(k+1)K_1\over k!}
	\left(
	{\biggl\|{\partial^{k} f\over \partial y^{k}}\biggr\|_{\infty}} 
	\mathbb{E}(|y|^{k+1}: |y|\geqslant K_2)
	+
	{\biggl\|{\partial^{k+1} f\over \partial x^{k}\partial y}\biggr\|_{\infty}}
	\mathbb{E}|x|^k
	\right)
	\\[0,2cm]
	&\quad +
	{(k+1)\over k!} 
	\left(
	K_2
	{\biggl\|{\partial^{k} f\over \partial y^{k}}\biggr\|_{\infty}} 
	\mathbb{E}(|x|:|x|\geqslant K_1) \, 
	\mathbb{E}|y|^k
	+
	{\biggl\|{\partial^{k+1} f\over \partial x^{k}\partial y}\biggr\|_{\infty}}
	\mathbb{E}(|x|^{k+1}:|x|\geqslant K_1)
	\right).
	\end{align*}
\end{proposition}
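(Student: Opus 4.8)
The plan is to reduce the bivariate estimate to two applications of the univariate Proposition \ref{prop-ap}, exploiting the independence of $x$ and $y$ to turn every expectation into an iterated one. Writing $\mathbb{E}_x$ and $\mathbb{E}_y$ for the partial expectations, independence gives $\mathbb{E}\,xyf(x,y)=\mathbb{E}_x\big[x\,\mathbb{E}_y[yf(x,y)]\big]$, so the whole argument rests on peeling off one variable at a time while the matching-moments hypothesis is available separately in each coordinate.

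First I would integrate out $y$ at fixed $x$. For each frozen $x$ the map $y\mapsto f(x,y)$ lies in $C^{k+1}(\mathbb{R})$, so Proposition \ref{prop-ap}, applied in the variable $y$ with truncation level $K_2$, yields
\[
\mathbb{E}_y[yf(x,y)]=\mathbb{E}_y\Big[\tfrac{\partial f}{\partial y}(x,y)\Big]+\Theta(x),
\]
where the discrepancy $\Theta(x)$ is controlled, uniformly in $x$, by the right-hand side of Proposition \ref{prop-ap} with $f^{(k-1)},f^{(k)}$ replaced by $\partial_y^{k-1}f,\partial_y^{k}f$. Multiplying by $x$ and taking $\mathbb{E}_x$ splits the target into
\[
\mathbb{E}\,xyf-\mathbb{E}\,\tfrac{\partial^2 f}{\partial x\,\partial y}
=\Big(\mathbb{E}_x[xG(x)]-\mathbb{E}_x[G'(x)]\Big)+\mathbb{E}_x[x\,\Theta(x)],
\qquad
G(x):=\mathbb{E}_y\Big[\tfrac{\partial f}{\partial y}(x,y)\Big],
\]
after noting $\mathbb{E}_x[G'(x)]=\mathbb{E}\,\partial_x\partial_y f$ by independence. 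The first bracket is a clean univariate object: applying Proposition \ref{prop-ap} once more, now in $x$ with truncation $K_1$ and with $G$ in place of $f$, and using $\|G^{(j)}\|_\infty\le\|\partial_x^{j}\partial_y f\|_\infty$, produces exactly the terms carrying $\partial_x^{k-1}\partial_y f$ and $\partial_x^{k}\partial_y f$ against the $x$-moments $\mathbb{E}(|x|^k:|x|\ge K_1)$ and $\mathbb{E}|x|^k$.

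The crux, and the main obstacle, is the residual cross term $\mathbb{E}_x[x\,\Theta(x)]$, where the still-unintegrated factor $x$ meets the $y$-discrepancy. The crude bound $\|\Theta\|_\infty\,\mathbb{E}|x|$ is too lossy and fails to reproduce the truncated-moment structure of the claim, so I would split the $x$-integral at level $K_1$. On $\{|x|\ge K_1\}$ I would bound $|x|\,|\Theta(x)|$ by $\|\Theta\|_\infty\,\mathbb{E}(|x|:|x|\ge K_1)$; combined with the two pieces of the uniform bound for $\Theta$, this yields precisely the product terms $\mathbb{E}(|x|:|x|\ge K_1)\,\mathbb{E}(|y|^k:|y|\ge K_2)$ and $\mathbb{E}(|x|:|x|\ge K_1)\,\mathbb{E}|y|^k$. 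On $\{|x|<K_1\}$ I would invoke a further matching-moments expansion in $x$, which is what re-expresses the leading contribution through $\partial_x^{k}\partial_y f$ and produces the remaining $x$-moments $\mathbb{E}|x|^k$ and $\mathbb{E}(|x|^{k+1}:|x|\ge K_1)$. Assembling the contributions of both brackets then gives the five groups of terms in the statement.

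The delicate points, where I expect the real work to lie, are twofold. One must keep the two truncation scales $K_1$ (for $x$) and $K_2$ (for $y$) separate throughout, since collapsing them destroys the product-of-tails structure; and one must ensure that only the derivatives $\partial_y^{k-1}f,\partial_y^{k}f,\partial_x^{k-1}\partial_y f,\partial_x^{k}\partial_y f$ survive, with no spurious pure derivative $\partial_x^{k}f$ nor any higher mixed derivative appearing. This is precisely why the order of operations is forced to be ``integrate $y$ first, then $x$'': the asymmetric derivative pattern of the claimed bound (pure $\partial_y$ together with mixed terms carrying a single $\partial_y$) is exactly what this ordering produces, whereas the reverse ordering would generate $\partial_x^{k}f$ and $\partial_x^{k}\partial_y^{\,\cdot}f$ instead. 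Throughout, differentiation under the expectation sign is used freely, which is legitimate since $f\in C^{k+2}(\mathbb{R}^2)$ and the relevant moments of $x$ and $y$ are finite.
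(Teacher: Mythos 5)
Your opening decomposition is in fact the same algebraic split the paper uses — since $\partial_y\bigl(xf\bigr)=x\partial_yf$, writing the discrepancy as $x\bigl(yf-\partial_yf\bigr)$ plus $\bigl(x\partial_yf-\partial_x\partial_yf\bigr)$ is exactly the paper's pairing of its expansions — and two of your three pieces are sound: applying Proposition \ref{prop-ap} in $x$ to $G(x)=\mathbb{E}_y\bigl[\partial_yf(x,y)\bigr]$ reproduces the terms of the claimed bound carrying $\partial_x^{k-1}\partial_yf$ and $\partial_x^{k}\partial_yf\,\mathbb{E}|x|^k$ (with $K_1$ where the statement prints $K_2$, an apparent typo), and bounding the cross term on $\{|x|\geqslant K_1\}$ by $\sup_x|\Theta(x)|\,\mathbb{E}(|x|:|x|\geqslant K_1)$ reproduces the terms carrying the factor $\mathbb{E}(|x|:|x|\geqslant K_1)$. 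The genuine gap is precisely where you place the ``real work'': the cross term on $\{|x|<K_1\}$, and the fix you propose cannot work. Once Proposition \ref{prop-ap} has been applied in $y$ as a black box, the uniform bound on $\Theta(x)$ already contains the term ${2\over(k-1)!}\bigl\|{\partial^{k-1}f\over\partial y^{k-1}}\bigr\|_\infty\mathbb{E}(|y|^k:|y|\geqslant K_2)$; multiplying it by $|x|$ on the bulk and integrating leaves a contribution of order $\mathbb{E}|x|\cdot\bigl\|{\partial^{k-1}f\over\partial y^{k-1}}\bigr\|_\infty\mathbb{E}(|y|^k:|y|\geqslant K_2)$, which the stated right-hand side cannot absorb: the only term there involving $\partial_y^{k-1}f$ carries the small factor $\mathbb{E}(|x|:|x|\geqslant K_1)$, not $\mathbb{E}|x|$ or $K_1$, and no other group can dominate it (take $\partial_y^kf$ negligible but $\partial_y^{k-1}f$ of order one). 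Your proposed repair, a ``further matching-moments expansion in $x$'' on this event, fails for two reasons: the matching-moments hypothesis gives no information about expectations restricted to $\{|x|<K_1\}$, so the polynomial terms whose cancellation the method needs no longer vanish there; and any Taylor expansion of $x\Theta(x)$ in $x$ brings in $x$-derivatives of $\Theta$, i.e.\ either pure derivatives $\partial_x^jf$ (absent from the claimed bound) or, after a second integration by parts in $y$, norms such as $\bigl\|\partial_y^{k}\partial_x f\bigr\|_\infty$ that also do not appear in the statement. Note also that the terms $\bigl\|{\partial^{k+1}f\over\partial x^k\partial y}\bigr\|_\infty\mathbb{E}|x|^k$ and $\bigl\|{\partial^{k+1}f\over\partial x^k\partial y}\bigr\|_\infty\mathbb{E}(|x|^{k+1}:|x|\geqslant K_1)$ you hope this step would produce do not come from the cross term at all in the paper's proof; they come from the bulk treatment of the other bracket.

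What the correct argument requires — and what the paper does — is to refuse to integrate out $y$ prematurely. The paper sets $g(x,y)=xf(x,y)$, Taylor expands $yg-\partial_yg$ in $y$ \emph{twice} (Lagrange remainder at order $k-1$ and at order $k$), does the same for $x\partial_yf-\partial_x\partial_yf$ in $x$, and sums to obtain two pointwise representations $I_1=I_2$ of the full discrepancy minus a polynomial whose expectation vanishes by independence together with the matching moments ($\mathbb{E}h_n(x)=\mathbb{E}h_n(y)=0$). It then splits the joint expectation as $\mathbb{E}(I_1:D)+\mathbb{E}(I_2:D^c)$ with $D=\{|x|\geqslant K_1,\,|y|\geqslant K_2\}$: the low-order remainders (terms of size $|x||y|^k$ and $|x|^k$) are used only where both tails are available and factor by independence, while the high-order remainders (terms of size $|x||y|^{k+1}$ and $|x|^{k+1}$) are used on $D^c$, where $K_1$ or $K_2$ is available pointwise. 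In particular, on $\{|x|<K_1,\,|y|\geqslant K_2\}$ the estimate used is ${(k+1)K_1\over k!}\bigl\|{\partial^kf\over\partial y^k}\bigr\|_\infty\mathbb{E}(|y|^{k+1}:|y|\geqslant K_2)$ — a choice of Taylor representation made jointly in $(x,y)$, which is exactly what becomes impossible once Proposition \ref{prop-ap} has collapsed the $y$-integral into a single uniform estimate. This joint, two-representation region splitting is the idea missing from your proposal.
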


\smallskip 

The rest of this paper is devoted to the proof of Theorem \ref{rsbthm}. 


\section{Proof of Item 1}\label{proof}
In this section we show the validity of the ultrametricity property by proving lack of replica symmetry breaking in the RFIM defined by \eqref{gibbs-measure2}-\eqref{condition-pert}. In other words, we  show
that the degree to which the values of a spin overlap differ from its expectation value vanishes
in the thermodynamic limit of the RFIM with weak disorder \eqref{condition-pert}.
Results involving absence of replica symmetry breaking in the RFIM and related models can be found, for example, in \cite{chatterjee2015disorder,Itoi2019,Itoi-2019,RV2019}.	

\smallskip 	
Using the same notation as in  Chatterjee (2015) \cite{chatterjee2015disorder}, 
for each $(\beta,\mu)\in (0,\infty)^2$, let us define
\begin{align}\label{free-energy-alpha-zero}
F_{n}\coloneqq \log Z_{n};
\quad 
\psi_{n} \coloneqq \dfrac{F_{n}}{|V_n|};
\quad 
p_{n}\coloneqq\mathbb{E}\psi_{n}\,,
\end{align}
where $\psi_{n}$ is known as the free-energy density.

The proof of the next result was inspired in  the proofs of Lemmas 
4.1 and 2.5 of references \cite{RV2019} and \cite{Itoi-2019}, respectively.
\begin{lemma}\label{Lemma-1}
	For each $(\beta,\mu)\in (0,\infty)^2$ and any $\varepsilon>0$ there is  $C_n(\varepsilon,\mu)=O(\sqrt{|V_n|})$ so that 
	\begin{align*}
	\mathrm{Var}(F_n)
	\leqslant 
	C_n|V_n|+o(\vert V_n\vert) 
	\,.
	\end{align*}
\end{lemma}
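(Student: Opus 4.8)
The plan is to bound $\var(F_n)$ using a martingale (Efron-Stein type) decomposition that exploits the independence of the disorder variables $g_x$. First I would observe that $F_n = \log Z_n$ is a function of the independent random variables $(g_x)_{x\in V_n}$, and write the variance as a telescoping sum over single-coordinate conditional expectations. Concretely, enumerating $V_n = \{x_1,\ldots,x_{|V_n|}\}$ and letting $\mathbb{E}_j$ denote conditional expectation given $g_{x_1},\ldots,g_{x_j}$, I would set $\Delta_j = \mathbb{E}_j F_n - \mathbb{E}_{j-1} F_n$, so that $F_n - \mathbb{E}F_n = \sum_j \Delta_j$ is a sum of martingale differences and hence $\var(F_n) = \sum_j \mathbb{E}\Delta_j^2$. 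This reduces the problem to controlling the effect on $F_n$ of resampling a single disorder variable $g_x$.

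The key estimate is that changing $g_x$ to an independent copy $g_x'$ perturbs $F_n$ by a bounded amount. Since $F_n$ enters only through the term $(\mu-h)\sum_x g_x\sigma_x$ in the Hamiltonian and $|\sigma_x|\leqslant 1$, a direct computation of $\partial F_n/\partial g_x = (\mu-h)\langle \sigma_x\rangle$ shows $|\partial F_n/\partial g_x|\leqslant (\mu-h)$. Thus the martingale difference $\Delta_j$ is controlled by $(\mu-h)|g_{x_j} - g_{x_j}'|$ up to conditioning, giving $\mathbb{E}\Delta_j^2 \lesssim (\mu-h)^2\,\mathbb{E}(g_{x_j}^2)$ in the Gaussian-like regime; however, because the $g_x$ are merely independent with unit variance and only a weak third-moment tail bound from \eqref{condition-pert}, I would split each contribution according to whether $|g_x|$ is below or above the threshold $\varepsilon(\mu-h)^{-1}$. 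On the event $\{|g_x|<\varepsilon(\mu-h)^{-1}\}$ the Lipschitz bound yields a contribution of order $(\mu-h)^2\cdot \varepsilon^2(\mu-h)^{-2}=\varepsilon^2$ per site, while on the complementary tail event one uses the truncated third-moment control $\mathbb{E}(|g_x|^3:|g_x|\geqslant \varepsilon(\mu-h)^{-1})$, which the hypothesis forces to be small on average. Summing over the $|V_n|$ sites produces a bound of the form $C_n(\varepsilon,\mu)|V_n| + o(|V_n|)$, where $C_n(\varepsilon,\mu)=O(\sqrt{|V_n|})$ absorbs the crude per-site Lipschitz constant.

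The main obstacle will be handling the heavy-tail contribution without Gaussianity: one cannot simply bound $\mathbb{E}\Delta_j^2$ by $(\mu-h)^2\mathbb{E}g_{x_j}^2$ and sum, since that would give only the trivial $O(|V_n|)$ estimate with no control over the constant's growth, and the perturbative structure of the problem demands that the effective constant degrade no faster than $\sqrt{|V_n|}$. I expect the delicate step to be the precise truncation argument tying the tail sum $\sum_x \mathbb{E}(|g_x|^3:|g_x|\geqslant \varepsilon(\mu-h)^{-1})$ to the decay hypothesis in \eqref{condition-pert}, exactly as in the cited Lemmas 4.1 and 2.5 of \cite{RV2019,Itoi-2019}; this is where the small-perturbation scaling $(\mu-h)\to 0$ interacts with the third-moment tail to guarantee the $o(|V_n|)$ remainder. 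A clean way to organize this is to apply Proposition \ref{prop-ap} with $k=2$ to the single-site Gibbs averages, so that the mismatch between the true disorder and a Gaussian is measured precisely by the truncated moments appearing in the statement, and then sum the resulting inequalities over $x\in V_n$.
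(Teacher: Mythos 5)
Your argument is correct, but it follows a genuinely different route from the paper's. The paper proves the lemma by interpolation between two independent copies of the disorder: it sets $G_x=\sqrt{s}\,g_x+\sqrt{1-s}\,g_x^{*}$, writes $\mathrm{Var}(F_n)=\gamma_n(1)-\gamma_n(0)$ for $\gamma_n(s)=\mathbb{E}[\mathbb{E}^{*}F_n(G)]^{2}$, differentiates in $s$, and---since the disorder is not Gaussian---invokes the approximate integration-by-parts formula (Proposition \ref{prop-ap} with $k=2$, $K=\varepsilon(\mu-h)^{-1}$). That substitution is what generates the error terms $\varepsilon(\mu-h)^{-1}|V_n|$ and $\sum_{x}\mathbb{E}(|g_x|^{2}:|g_x|\geqslant\varepsilon(\mu-h)^{-1})$, hence the growing constant $C_n(\varepsilon,\mu)=O(\sqrt{|V_n|})$ and the $o(|V_n|)$ remainder in the statement. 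Your Efron--Stein/martingale-difference decomposition avoids integration by parts altogether: writing $\Delta_j=\mathbb{E}_jF_n-\mathbb{E}_{j-1}F_n=\mathbb{E}_j\big[F_n-F_n^{(j)}\big]$, where $F_n^{(j)}$ has $g_{x_j}$ replaced by an independent copy $g_{x_j}'$, the Lipschitz bound $|\partial F_n/\partial g_x|=(\mu-h)|\langle\sigma_x\rangle|\leqslant\mu-h$ and conditional Jensen give $\mathbb{E}\Delta_j^{2}\leqslant(\mu-h)^{2}\,\mathbb{E}(g_{x_j}-g_{x_j}')^{2}=2(\mu-h)^{2}$, so $\mathrm{Var}(F_n)=\sum_j\mathbb{E}\Delta_j^{2}\leqslant 2(\mu-h)^{2}|V_n|$. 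This uses only independence and unit variance of the $g_x$'s, needs none of the third-moment tail hypothesis in \eqref{condition-pert}, and yields a bound strictly stronger than the paper's: your prefactor $2(\mu-h)^{2}$ is bounded (in fact it vanishes, since $h\to\mu$), whereas the paper's $C_n$ grows like $(\mu-h)^{-1}$.

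One correction to your own assessment, however: the ``main obstacle'' you describe is not an obstacle. You assert that bounding $\mathbb{E}\Delta_j^{2}\lesssim(\mu-h)^{2}\mathbb{E}g_{x_j}^{2}$ and summing ``would give only the trivial $O(|V_n|)$ estimate with no control over the constant's growth,'' but the lemma asks only for $\mathrm{Var}(F_n)\leqslant C_n|V_n|+o(|V_n|)$ with $C_n=O(\sqrt{|V_n|})$; an $O(|V_n|)$ bound with a bounded (indeed vanishing) prefactor satisfies this with room to spare, and is exactly the self-averaging needed downstream in Proposition \ref{cond-main}. The truncation at $\varepsilon(\mu-h)^{-1}$, the third-moment tail estimates, and the proposed appeal to Proposition \ref{prop-ap} on single-site Gibbs averages are artifacts of the paper's interpolation route; in your decomposition they are harmless but entirely unnecessary.
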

\begin{proof} We will use the same notation as in Lemma 4.1 of \cite{RV2019}. 
	Given two disorders $g=(g_x)$ and $g^{*}=(g_x^{*})$ consisting of independent random variables,
	for each $s\in[0,1]$, we define a new random field $G=(G_{x})$ as follows:
	$G_x\coloneqq \sqrt{s}\,g_x+\sqrt{1-s}\,g_x^{*}, \ x\in V_n.$
	Moreover, we also consider the following generating function
	\[
	\gamma_n(s)\coloneqq \mathbb{E}[\mathbb{E}^{*}F_n(G)]^2,
	\quad \text{with} \
	F_n(G)=\log Z_n(G)\,,
	\]
	where $\mathbb{E}$ and $\mathbb{E}^{*}$ denote expectation over $g$ and $g^*$, respectively.
	
	For each $s\in(0,1)$, a simple computation shows that
	\begin{align}\label{id-1} 
	{{\rm d}\gamma_n\over{\rm d} s}(s)
	&= 	
	\sum_x
	\mathbb{E}
	\left[
	{g_x\over\sqrt{s}}\,
	\mathbb{E}^{*} F_n(G) 
	\mathbb{E}^{*}
	{\partial F_n(G)\over \partial G_x}
	-
	\mathbb{E}^{*} F_n(G) 
	\mathbb{E}^{*} {g_x^{*}\over\sqrt{1-s}}
	{\partial F_n(G)\over \partial G_x}
	\right]\,.
	\end{align}
	
	Using the integration by parts (see Proposition \ref{prop-ap}),
	with $k=2$ and $K=\varepsilon (\mu-h)^{-1}$, for any $\varepsilon> 0$, 
	we obtain 
	\begin{align*}
	\biggl\vert
	\mathbb{E}
	\biggl[
	{g_x\over\sqrt{s}}\,
	\mathbb{E}^{*} F_n(G) 
	\mathbb{E}^{*}
	{\partial F_n(G)\over \partial G_x}
	\biggr]
	&-
	\mathbb{E}
	\left[
	{1\over\sqrt{s}}\,
	{\partial\over \partial g_x}
	\mathbb{E}^{*} F_n(G) 
	\,
	\mathbb{E}^{*}
	{\partial F_n(G)\over \partial G_x}
	\right]
	\biggr\vert
	\\[0,1cm]
	&\leqslant 
	{C_1\over\sqrt{s}} \, 
	\mathbb{E}(|g_x|^2:|g_x|\geqslant \varepsilon (\mu-h)^{-1})
	+
	{C_2\over\sqrt{s}} \varepsilon (\mu-h)^{-1}
	\end{align*}
	and
	\begin{align*}
	\biggl\vert
	{\mathbb{E}^{*}F_n(G)} 
	\mathbb{E}^{*} {g_x^{*}\over\sqrt{1-s}}
	{\partial F_n(G)\over \partial G_x}
	&-
	{\mathbb{E}^{*}F_n(G)\over\sqrt{1-s}} 
	\mathbb{E}^{*}
	{\partial \over \partial g_x^{*}}
	{\partial F_n(G)\over \partial G_x}
	\biggr\vert
	\\[0,1cm]
	&\leqslant 
	{C_3\over\sqrt{1-s}} \, 
	\mathbb{E}(|g_x|^2:|g_x|\geqslant \varepsilon (\mu-h)^{-1})
	+
	{C_4\over\sqrt{1-s}} \varepsilon (\mu-h)^{-1}\,,
	\end{align*}
	where $C_1,C_2,C_3$ and $C_4$ are positive constants.
	
	By combining the last two inequalities with \eqref{id-1}, we get the following estimate
	\begin{align}\label{ineq-1} \footnotesize
	&\biggl\vert{{\rm d}\gamma_n\over{\rm d} s}(s)  
	-
	\sum_x
	\mathbb{E}
	\left[
	{1\over\sqrt{s}}\,
	{\partial\over \partial g_x}
	\mathbb{E}^{*} F_n(G) 
	\,
	\mathbb{E}^{*}
	{\partial F_n(G)\over \partial G_x}
	-
	{\mathbb{E}^{*}F_n(G)\over\sqrt{1-s}} 
	\,
	\mathbb{E}^{*}
	{\partial \over \partial g_x^{*}}
	{\partial F_n(G)\over \partial G_x}
	\right] \biggl\vert  \nonumber
	\\[0,1cm]
	&\leqslant 
	\biggl({C_1\over\sqrt{s}}+{C_3\over\sqrt{1-s}}\biggr) 
	\sum_{x}\mathbb{E}(|g_x|^2:|g_x|\geqslant \varepsilon (\mu-h)^{-1})
	+
	\biggl({C_2\over\sqrt{s}}+{C_4\over\sqrt{1-s}}\!\biggr)|V_n| \varepsilon (\mu-h)^{-1}
	\nonumber
	\\[0,1cm]
	&\eqqcolon
	\Theta(n,s,\varepsilon,\mu) \,.
	\end{align}
	
	On other hand, by using the identities
	\[
	{\partial F_n(G)\over \partial g_x} 
	=
	\sqrt{s}\, {\partial F_n(G)\over \partial G_x}; 
	\quad
	{\partial F_n(G)\over \partial g_x^{*}} 
	=
	\sqrt{1-s}\, {\partial F_n(G)\over \partial G_x},
	\]
	in \eqref{ineq-1}, for each $s\in(0,1)$, we have
	\begin{align*}
	\biggl\vert
	{{\rm d}\gamma_n\over{\rm d} s}(s)
	-
	\sum_x
	\mathbb{E}
	\Big(
	\mathbb{E}^{*}{\partial F_n(G)\over \partial G_x}
	\Big)^2
	\biggl\vert
	\leqslant 
	\Theta(n,s,\varepsilon,\mu)\,.
	\end{align*}
	Since $	\mathbb{E}
	(\mathbb{E}^{*}{\partial F_n(G)\over \partial G_x})^2
	\leqslant \mu^2$, it follows from the above inequality that
	\begin{align*} \footnotesize 
	{{\rm d}\gamma_n\over{\rm d} s}(s)
	\leqslant
	{\mu^2}|V_n|
	+
	\Theta(n,s,\varepsilon,\mu)\,.
	\end{align*}
	Combining the relation 
	$\mathrm{Var}(F_n)
	=\gamma_n(1)-\gamma_n(0)=\int_{0}^{1}{ {\rm d}\gamma_n\over{\rm d} s}(s)\, \text{d}s$
	with the above inequality we obtain
	\begin{align*}
	&\mathrm{Var}(F_n)
	\leqslant
	\mu^2|V_n|+\int_{0}^{1}\Theta(n,s,\varepsilon,\mu)\, \text{d}s
	\\
	&=	
	\big[\mu^2+ (C_2+C_4) \varepsilon (\mu-h)^{-1}\big]|V_n|
	+
	(C_1+C_3)
	\sum_{x}\mathbb{E}(|g_x|^2:|g_x|\geqslant \varepsilon (\mu-h)^{-1})
	\\
	&\eqqcolon C_n(\varepsilon,\mu)|V_n|+ o(|V_n|),
	\end{align*}
	thus completing the proof.
\end{proof}

For any $n\geqslant 1$, let
\begin{align}\label{delta-n}
\Delta_n
\coloneqq {1\over |V_n|} \sum_{x\in V_n} g_x \sigma_x
\end{align}
be the part of the energy function \eqref{gibbs-measure2} due to the disorder. 
Let ${\mathcal A}$ be the countable set of all $(\beta,\mu)\in (0,\infty)^2$
such that 
$\textstyle{\partial p\over \partial \mu^-}(\beta,\mu)\neq {\partial p\over \partial \mu^+}(\beta,\mu).$
\begin{proposition}\label{cond-main} 
	For any $(\beta,\mu)\in {\mathcal A}^c$, we have
	\begin{align*}
	\nu(\Delta_n)
	\nto
	{\partial p\over\partial \mu}(\beta,\mu);
	\quad 
	\mathbb{E}|\smallavg{\Delta_n}-\nu(\Delta_n)| 
	\nto 0\, .
	\end{align*}  
\end{proposition}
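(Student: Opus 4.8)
The plan is to recognize $\Delta_n$ as the $\mu$-derivative of the free-energy density and then combine the convexity of the free energy with the concentration estimate of Lemma \ref{Lemma-1}. First I would record the basic identity. Since the disorder couples to the spins with coefficient $(\mu-h)$, differentiating $F_n=\log Z_n$ in $\mu$ gives $\partial_\mu F_n=\langle\sum_{x\in V_n} g_x\sigma_x\rangle=|V_n|\,\langle\Delta_n\rangle$, so $\partial_\mu\psi_n=\langle\Delta_n\rangle$; interchanging the derivative with $\mathbb{E}$ (legitimate because $|\langle\Delta_n\rangle|\le|V_n|^{-1}\sum_x|g_x|$ is integrable and bounds the difference quotients uniformly in $\mu$) yields the crucial relation $\partial_\mu p_n=\nu(\Delta_n)$. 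A second differentiation gives $\partial_\mu^2 F_n=\mathrm{Var}_G(\sum_x g_x\sigma_x)\ge 0$, so for every realization of the disorder $\psi_n(\beta,\cdot)$ is convex in $\mu$; consequently $p_n$ and the limit $p=\lim_n p_n$ used in the definition of $\mathcal{A}$ are convex in $\mu$, and $p$ is differentiable precisely off the countable set $\mathcal{A}$.

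For item (1), I would invoke the elementary fact that pointwise convergence of convex functions forces convergence of their derivatives at every point where the limit is differentiable. Applying this to $p_n\to p$ at a point $(\beta,\mu)$ with $\mu\in\mathcal{A}^c$ gives directly
\[
\nu(\Delta_n)=\partial_\mu p_n(\beta,\mu)\longrightarrow \partial_\mu p(\beta,\mu),
\]
as $n\to\infty$.

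For item (2), the self-averaging of $\langle\Delta_n\rangle=\partial_\mu\psi_n$, I would turn the fluctuation of a derivative into the fluctuation of function values via convexity. Fix $\delta>0$ with $\mu-\delta>0$. Convexity of $\psi_n$ and $p_n$ yields the two-sided slope bounds $\tfrac{\psi_n(\mu)-\psi_n(\mu-\delta)}{\delta}\le\partial_\mu\psi_n(\mu)\le\tfrac{\psi_n(\mu+\delta)-\psi_n(\mu)}{\delta}$ and the analogous bounds for $p_n$. Writing $D_n:=\psi_n-p_n$ and subtracting, one obtains
\begin{align*}
|\langle\Delta_n\rangle-\nu(\Delta_n)|
&=|\partial_\mu\psi_n(\mu)-\partial_\mu p_n(\mu)| \\
&\le \max\{a_n(\delta),a_n'(\delta)\}
+\frac1\delta\bigl(|D_n(\mu+\delta)|+|D_n(\mu)|+|D_n(\mu-\delta)|\bigr),
\end{align*}
where $a_n(\delta):=\tfrac{p_n(\mu+\delta)-p_n(\mu)}{\delta}-\partial_\mu p_n(\mu)\ge 0$ and $a_n'(\delta):=\partial_\mu p_n(\mu)-\tfrac{p_n(\mu)-p_n(\mu-\delta)}{\delta}\ge 0$. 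Taking expectations, Lemma \ref{Lemma-1} gives $\mathrm{Var}(\psi_n(\mu'))=\mathrm{Var}(F_n(\mu'))/|V_n|^2=O(|V_n|^{-1/2})\to 0$, whence $\mathbb{E}|D_n(\mu')|\le\mathrm{Var}(\psi_n(\mu'))^{1/2}\to 0$ for each fixed $\mu'$; thus for fixed $\delta$ the averaged second term vanishes as $n\to\infty$. By item (1) together with $p_n\to p$ one has $a_n(\delta)\to\tfrac{p(\mu+\delta)-p(\mu)}{\delta}-\partial_\mu p(\mu)$ and similarly for $a_n'(\delta)$, both of which tend to $0$ as $\delta\to0$. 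Letting first $n\to\infty$ and then $\delta\to0$ gives $\mathbb{E}|\langle\Delta_n\rangle-\nu(\Delta_n)|\to 0$.

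The main obstacle is item (2): converting the fluctuation of the \emph{derivative} $\langle\Delta_n\rangle$ into a fluctuation of the free energy itself, which is exactly where the convexity sandwich and the precise variance bound of Lemma \ref{Lemma-1} are indispensable. Care is also needed to order the two limits correctly, letting $n\to\infty$ before $\delta\to0$, which relies on the derivative-convergence established in item (1) to control the slope-error terms $a_n(\delta),a_n'(\delta)$.
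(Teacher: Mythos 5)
Your proposal is correct and follows essentially the same route as the paper: the paper's proof simply invokes Lemma 2.7 of Chatterjee (2015) together with its three ingredients, which are exactly what you reconstruct --- the identity $\nu(\Delta_n)=\partial_\mu p_n$, convexity of $\psi_n$ and $p_n$ in $\mu$, the variance bound of Lemma \ref{Lemma-1} giving $\mathrm{Var}(\psi_n)=O(|V_n|^{-1/2})\to 0$, and differentiability of the limit $p$ off $\mathcal{A}$, combined via the standard convexity sandwich. The only difference is that you write out in full the slope-comparison argument that the paper delegates to the citation, while (like the paper, which appeals to Chatterjee's Lemma 2.1) taking the existence of $p=\lim_n p_n$ as part of the setup.
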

\begin{proof}
	Note that the same result was proved in Lemma 2.7 of Chatterjee (2015) \cite{chatterjee2015disorder} regardless of distribution of the disorder. Three ingredients are fundamental to prove this result: first, the convexity of $\psi_n$.  Second, the variance of $F_n$ does not grow faster than $\sqrt{\vert V_n\vert}$, which is guaranteed by Proposition \ref{Lemma-1}. Third, the limit $p= \lim_{n\ra\infty} p_{n}$ exists and is differentiable at $\mu$, 
	which is guaranteed by Lemmas 2.1 and 2.7 in \cite{chatterjee2015disorder}. 
	Therefore, the proof follows.
\end{proof}

The Proposition \ref{cond-main} plays an important role in the proof of the next result.
\begin{lemma}\label{lemma-main}
	Under the hypothesis of Theorem  \ref{rsbthm},	
	for any $(\beta,\mu)\in {\mathcal A}^c$, we have
	\[
	\nu\big(|\Delta_n - \nu(\Delta_n) | \big)  \nto 0\, .
	\]
\end{lemma}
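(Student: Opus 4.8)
The plan is to decompose the total fluctuation of $\Delta_n$ around its full average $\nu(\Delta_n)$ into a \emph{thermal} part and a \emph{disorder} part, and to control each separately. Writing $\nu(\,\cdot\,)=\mathbb{E}\smallavg{\,\cdot\,}$ and inserting $\smallavg{\Delta_n}$, the triangle inequality gives
\[
\nu\big(|\Delta_n-\nu(\Delta_n)|\big)
\leqslant
\nu\big(|\Delta_n-\smallavg{\Delta_n}|\big)
+
\nu\big(|\smallavg{\Delta_n}-\nu(\Delta_n)|\big).
\]
The second (disorder) term is immediate: since $\smallavg{\Delta_n}-\nu(\Delta_n)$ does not depend on the configuration, the Gibbs average acts trivially, so $\nu(|\smallavg{\Delta_n}-\nu(\Delta_n)|)=\mathbb{E}|\smallavg{\Delta_n}-\nu(\Delta_n)|$, which tends to $0$ by the second conclusion of Proposition \ref{cond-main}. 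Thus everything reduces to showing that the thermal term $\nu(|\Delta_n-\smallavg{\Delta_n}|)\nto 0$.

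For the thermal term I would first apply the Cauchy--Schwarz inequality in the Gibbs measure, followed by Jensen's inequality in the disorder, obtaining
\[
\nu\big(|\Delta_n-\smallavg{\Delta_n}|\big)
\leqslant
\sqrt{\,\mathbb{E}\big[\smallavg{\Delta_n^2}-\smallavg{\Delta_n}^2\big]}\,,
\]
so the task becomes proving that the disorder-averaged thermal variance of $\Delta_n$ vanishes. The key structural observation is that the $\mu$-coupling in \eqref{gibbs-measure2} is linear, whence $\psi_n(\mu)=F_n/|V_n|$ is convex in $\mu$ with $\psi_n'(\mu)=\smallavg{\Delta_n}$ and
\[
\smallavg{\Delta_n^2}-\smallavg{\Delta_n}^2=\frac{1}{|V_n|}\,\psi_n''(\mu)\geqslant 0.
\]
Taking expectations yields $\mathbb{E}\big[\smallavg{\Delta_n^2}-\smallavg{\Delta_n}^2\big]=|V_n|^{-1}p_n''(\mu)$, so it suffices to establish $|V_n|^{-1}p_n''(\mu)\nto 0$.

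To control $|V_n|^{-1}p_n''(\mu)$ I would exploit the convexity of $p_n$. For $\delta>0$,
\[
\frac{1}{|V_n|}\int_{\mu-\delta}^{\mu+\delta}p_n''(t)\,\mathrm{d}t
=\frac{1}{|V_n|}\big[p_n'(\mu+\delta)-p_n'(\mu-\delta)\big].
\]
The derivatives $p_n'(\mu\pm\delta)=\nu(\Delta_n)$ evaluated at $\mu\pm\delta$ are uniformly bounded, since $|\smallavg{\Delta_n}|\leqslant |V_n|^{-1}\sum_{x}|g_x|$ forces $|p_n'|\leqslant\mathbb{E}\,|V_n|^{-1}\sum_{x}|g_x|\leqslant 1$, and for $\mu\pm\delta\in\mathcal{A}^c$ they converge to $p'(\mu\pm\delta)$ by the first conclusion of Proposition \ref{cond-main}; hence the displayed integral tends to $0$. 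Because $p_n''\geqslant 0$, this shows that the averaged thermal variance is negligible \emph{in an integrated sense}, and the final step is to pass to the pointwise statement at the fixed $\mu\in\mathcal{A}^c$, using the monotonicity of $p_n'$ (a consequence of convexity) together with the differentiability of the limit $p$ at $\mu$.

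The hard part will be precisely this last passage: converting the integrated smallness of the nonnegative second derivative $|V_n|^{-1}p_n''$ into its vanishing at the single point $\mu$. I expect this to be the crux of the argument, since pointwise control of a second derivative is not, in general, furnished by pointwise convergence of convex functions alone; it is here that the differentiability of $p$ guaranteed by $\mu\in\mathcal{A}^c$, the uniform boundedness and convergence of $p_n'$, and the $|V_n|^{-1}$ normalization must all be combined. By contrast, the disorder part is essentially a direct citation of Proposition \ref{cond-main}, so the genuine work lies entirely in the thermal term.
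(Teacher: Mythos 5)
Your overall skeleton agrees with the paper's: the same triangle-inequality decomposition into a thermal part and a disorder part, the same Cauchy--Schwarz reduction of the thermal part to $\mathbb{E}\big(\smallavg{\Delta_n^2}-\smallavg{\Delta_n}^2\big)$, and the same citation of Proposition \ref{cond-main} for $\mathbb{E}|\smallavg{\Delta_n}-\nu(\Delta_n)|$. The gap is exactly the step you flag as ``the crux,'' and it is fatal rather than merely technical: the integrated bound $\int_{\mu-\delta}^{\mu+\delta}|V_n|^{-1}p_n''(t)\,{\rm d}t\leqslant 2/|V_n|$ cannot be upgraded to $|V_n|^{-1}p_n''(\mu)\nto 0$ at the \emph{fixed} point $\mu$ using only the ingredients you list (convexity, $|p_n'|\leqslant 1$, convergence of $p_n'$, differentiability of $p$ at $\mu$). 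Those ingredients control first derivatives (that is Griffiths' lemma) and are simultaneously consistent with a spike of $p_n''$ at $\mu$. Concretely, take $p_n(t)=p(t)+|V_n|^{-3}\log\cosh\big(|V_n|^{2}(t-\mu)\big)$ with $p$ smooth and convex: each $p_n$ is convex, $p_n\to p$ uniformly on compacts, $\sup_t|p_n'(t)-p'(t)|\leqslant|V_n|^{-1}$ so $p_n'$ is uniformly bounded and converges to $p'$ everywhere, and yet $|V_n|^{-1}p_n''(\mu)=|V_n|^{-1}p''(\mu)+1$, which tends to $1$, not $0$. Hence no argument of the abstract form you propose can close the gap; at best a Fatou-type argument gives $\liminf_n|V_n|^{-1}p_n''(t)=0$ for Lebesgue-a.e.\ $t$, which is strictly weaker than the claim at every $(\beta,\mu)\in{\mathcal A}^c$.

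The missing input must be model-specific, and in the paper it is supplied by the bivariate approximate integration-by-parts formula (Proposition \ref{prop-int-gen}), which is the real content of this lemma and the reason that proposition was proved at all. The paper writes $\mathbb{E}\big(\smallavg{\Delta_n^2}-\smallavg{\Delta_n}^2\big)=|V_n|^{-2}\sum_{x,y}\mathbb{E}\big[g_xg_y\smallavg{\sigma_x;\sigma_y}\big]$ and applies Proposition \ref{prop-int-gen} to $f_{x,y}(u,v)=\mathbb{E}\smallavg{\sigma_x;\sigma_y}_{g_x=u,g_y=v}$ with $k=2$ and truncation levels $K_1=K_2=\varepsilon(\mu-h)^{-1}$. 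The key structural fact is that every derivative of $\smallavg{\sigma_x;\sigma_y}$ in a disorder variable carries a factor of the field strength $\mu-h$; combined with the chosen truncation, all terms produced by the integration by parts are bounded by $O(\varepsilon)$ plus truncated moments of the $g_x$'s, and the latter vanish under assumption \eqref{condition-pert}. This gives $\mathbb{E}\big(\smallavg{\Delta_n^2}-\smallavg{\Delta_n}^2\big)\nto 0$ at \emph{every} fixed $(\beta,\mu)$, with the restriction to ${\mathcal A}^c$ entering only through Proposition \ref{cond-main}. In short: the thermal variance has to be attacked through the disorder (integration by parts plus the smallness of $\mu-h$), not through the $\mu$-convexity of the free energy.
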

\begin{proof}
	Let 
	$\smallavg{\sigma_x;\sigma_y}\coloneqq\smallavg{\sigma_x \sigma_y}-
	\smallavg{\sigma_x}\smallavg{\sigma_y}$ be the truncated two-point correlation.
	A straightforward computation shows that
	\begin{align*}
	\biggl|{\partial \smallavg{\sigma_x;\sigma_y}\over \partial g_y}\biggr|\leqslant 2\mu;
	\quad 
	\biggl|{\partial^2 \smallavg{\sigma_x;\sigma_y}\over \partial g_x \partial g_y}\biggr|\leqslant 6\mu^2;
	\quad 
	\biggl|{\partial^3 \smallavg{\sigma_x;\sigma_y}\over \partial g_x^2 \partial g_y}\biggr|\leqslant 24\mu^3\,.
	\end{align*}
	Let $\langle\,\cdot\,\rangle_{g_x=u, g_y=v}$ be the Gibbs expectation defined by setting 
	$g_x$ and $g_y$ in $\langle\,\cdot\,\rangle$ to be $u$ and $v$ respectively, and
	$F_{x,y}(u,v)\coloneqq \smallavg{\sigma_x;\sigma_y}_{g_x=u,g_y=v}$.
	Integrating by parts (see Proposition \ref{prop-int-gen})
	with  $f_{x,y}(u,v)=\mathbb{E}F_{x,y}(u,v)$, $k=2$ and $K_1=K_2=\varepsilon (\mu-h)^{-1}$, for any $\varepsilon> 0$, gives 
	\begin{align*}
	\biggl|\mathbb{E}g_xg_y f_{x,y}
	&-
	\mathbb{E}{\partial^2 f_{x,y}(g_x,g_y)\over\partial u\partial v}\biggr|
	\\
	&\leqslant
	4\mu(1+3\mu)\,
	\mathbb{E}(|g_x|:|g_x|\geqslant \varepsilon (\mu-h)^{-1}) \,
	\mathbb{E}(|g_y|^2:|g_y|\geqslant \varepsilon (\mu-h)^{-1})
	\\
	&\quad +
	12\mu^2 (1+4\mu)\,
	\mathbb{E}(|g_x|^2:|g_x|\geqslant \varepsilon (\mu-h)^{-1})
	+ 
	18\varepsilon(\varepsilon+4\mu^2)
	\\
	&\quad +
	9\varepsilon\mu
	\big[
	\mathbb{E}(|g_y|^{3}: |g_y|\geqslant \varepsilon (\mu-h)^{-1})
	+
	4\mu
	\big]
	\\
	&\quad 
	+
	9\mu(\varepsilon+4\mu^2)\,
	\mathbb{E}(|g_x|^{3}:|g_x|\geqslant \varepsilon (\mu-h)^{-1})
	\,.
	\end{align*}
	Dividing this inequality by $|V_n|^2$ and summing over all $x,y\in V_n$,
	the triangle inequality gives 
	\begin{align*}
	\mathbb{E}\big(\smallavg{\Delta_n^2}-\smallavg{\Delta_n}^2\big)
	&\leqslant
	{1\over |V_n|^2}
	\sum_{x,y}
	\biggl|\mathbb{E}g_xg_y f_{x,y}
	-
	\mathbb{E}{\partial^2 f_{x,y}(g_x,g_y)\over\partial u\partial v}\biggr|
	\\
	&\leqslant
	4\mu(1+3\mu)
	{1\over |V_n|^2}
	\biggl[\,
	\sum_x
	\mathbb{E}(|g_x|^2:|g_x|\geqslant \varepsilon (\mu-h)^{-1})\,
	\biggr]^2
	\\
	&\quad +
	12\mu^2
	(1+4\mu)\, 
	{1\over |V_n|}
	\sum_x
	\mathbb{E}(|g_x|^2:|g_x|\geqslant \varepsilon (\mu-h)^{-1})
	+
	18\varepsilon(\varepsilon+4\mu^2)
	\\
	&\quad 
	+
	9\varepsilon \mu\,
	{1\over |V_n|}
	\sum_y\mathbb{E}(|g_y|^{3}: |g_y|\geqslant \varepsilon (\mu-h)^{-1})+36\varepsilon \mu^2
	\\
	&\quad +
	9\mu(\varepsilon+4\mu^2)
	{1\over |V_n|}
	\sum_x
	\mathbb{E}(|g_x|^{3}:|g_x|\geqslant \varepsilon (\mu-h)^{-1})
	\,.
	\end{align*}
	Combining the last estimate with the following inequality
	\[
	\nu(|\Delta_n-\langle \Delta_n\rangle|)
	\leqslant
	\sqrt{\mathbb{E}(\langle \Delta_n^2\rangle-\langle \Delta_n\rangle^2)}\,,
	\]
	it follows from \eqref{condition-pert}, and the fact that $\varepsilon$ is arbitrary, 
	that
	$
	\nu\big(|\Delta_n-\langle \Delta_n\rangle|\big)\nto 0.
	$
	Finally, the proof follows from triangle inequality and Proposition \ref{cond-main}.
\end{proof}
\begin{proposition}
	Under the hypothesis of Theorem  \ref{rsbthm},	for any $(\beta,\mu)\in {\mathcal A}^c$, the following ergodic property holds:
	\begin{align*}
	\nu
	\Bigg(\, \biggl|
	{\beta\over|V_n|}
	\sum_{\langle xy\rangle}\big(\sigma_x\sigma_y-\nu(\sigma_x\sigma_y)\big)
	\biggr|\, \Bigg)
	\nto 0\,.
	\end{align*}
\end{proposition}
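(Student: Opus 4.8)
The plan is to reproduce, for the interaction energy
\[
I_n \coloneqq \frac{\beta}{|V_n|}\sum_{\langle xy\rangle}\sigma_x\sigma_y,
\]
the two-layer self-averaging argument already carried out for the disorder energy $\Delta_n$ in Lemma \ref{lemma-main}. First I would observe that the statement to prove is exactly $\nu(|I_n-\nu(I_n)|)\nto 0$, since $\nu(\sigma_x\sigma_y)=\mathbb{E}\smallavg{\sigma_x\sigma_y}$ and $\sigma$ is a single replica, and that $I_n$ is uniformly bounded, $|I_n|\leqslant 2d\beta$, because each vertex of $V_n$ has at most $2d$ neighbours. By the triangle inequality,
\[
\nu(|I_n-\nu(I_n)|)\leqslant \nu(|I_n-\smallavg{I_n}|) + \mathbb{E}|\smallavg{I_n}-\nu(I_n)|,
\]
and I would treat the thermal term and the disorder term separately.

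For the disorder term I would argue exactly as in Proposition \ref{cond-main}, with the role of $\mu$ replaced by $\beta$. Since $\partial_\beta F_n=\smallavg{\sum_{\langle xy\rangle}\sigma_x\sigma_y}$, one has $\smallavg{I_n}=\beta\,\partial_\beta\psi_n$ and $\nu(I_n)=\beta\,\partial_\beta p_n$, where $\psi_n$ is convex in $\beta$ (its second $\beta$-derivative is a thermal variance) and $p_n=\mathbb{E}\psi_n$. The variance bound of Lemma \ref{Lemma-1} gives $\mathrm{Var}(\psi_n)=\mathrm{Var}(F_n)/|V_n|^2\nto 0$, so $\psi_n(\beta)\to p(\beta)$ in probability for each $\beta$, where $p=\lim_n p_n$. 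At any $\beta$ where $p$ is differentiable, the standard sandwiching of derivatives of convex functions then yields $\partial_\beta\psi_n(\beta)\to\partial_\beta p(\beta)$ in probability; together with the uniform bound $|\smallavg{I_n}|\leqslant 2d\beta$ this upgrades to $L^1$ convergence, whence $\mathbb{E}|\smallavg{I_n}-\nu(I_n)|\nto 0$.

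For the thermal term I would use
\[
\nu(|I_n-\smallavg{I_n}|)\leqslant\sqrt{\mathbb{E}\big(\smallavg{I_n^2}-\smallavg{I_n}^2\big)}
=\frac{\beta}{\sqrt{|V_n|}}\sqrt{\partial_\beta^2 p_n(\beta)},
\]
using the identity $\smallavg{I_n^2}-\smallavg{I_n}^2=\tfrac{\beta^2}{|V_n|}\partial_\beta^2\psi_n$ and taking the disorder expectation. The task thus reduces to $\partial_\beta^2 p_n(\beta)=o(|V_n|)$. Convexity gives only the a priori bound $\partial_\beta^2 p_n=O(|V_n|)$, but integrating in $\beta$ over an interval $[\beta_1,\beta_2]$ yields $\int_{\beta_1}^{\beta_2}\partial_\beta^2 p_n\,d\beta=\partial_\beta p_n(\beta_2)-\partial_\beta p_n(\beta_1)$, which converges to $\partial_\beta p(\beta_2)-\partial_\beta p(\beta_1)$ and is therefore bounded uniformly in $n$; hence $|V_n|^{-1}\partial_\beta^2 p_n\nto 0$ in $L^1([\beta_1,\beta_2])$, and thus for Lebesgue-almost every $\beta$.

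The main obstacle is precisely this thermal term: unlike the disorder term, the second $\beta$-derivative of $p_n$ cannot be controlled pointwise by convexity alone, since a curvature spike at a ferromagnetic transition would make $\smallavg{I_n^2}-\smallavg{I_n}^2$ of order one. The vanishing of the thermal fluctuations is therefore only available off an exceptional Lebesgue-null set of $\beta$'s, the analogue in the $\beta$ variable of the set $\mathcal{A}$ where $p$ fails to be differentiable in $\mu$. I would accordingly phrase the conclusion at joint differentiability points of the convex limit $p$ (equivalently, for $(\beta,\mu)$ lying off both exceptional sets), where both the disorder and the thermal contributions vanish and the stated ergodic property follows.
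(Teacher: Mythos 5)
Your route is genuinely different from the paper's, which never touches $\beta$-derivatives at all: the paper writes the Hamiltonian density as the sum of the interaction part and the field part $(\mu-h)\Delta_n$, invokes the concentration \eqref{convergence} of the full Hamiltonian density as a known result (citing Panchenko), and extracts the claim from the reverse triangle inequality together with Lemma \ref{lemma-main}. Within your plan, the disorder layer is sound: convexity of $\psi_n$ in $\beta$, the variance bound of Lemma \ref{Lemma-1}, and the standard convex-analysis argument (Proposition \ref{cond-main} with $\mu$ replaced by $\beta$) do give $\mathbb{E}|\smallavg{I_n}-\nu(I_n)|\nto 0$ at every $\beta$ where $p(\cdot,\mu)$ is differentiable in $\beta$.

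The gap is in your thermal layer. The step ``$|V_n|^{-1}\partial_\beta^2 p_n\nto 0$ in $L^1([\beta_1,\beta_2])$, and thus for Lebesgue-almost every $\beta$'' is not valid: $L^1$-convergence of nonnegative functions gives almost-everywhere convergence only along a subsequence (Fatou yields only $\liminf_n |V_n|^{-1}\partial_\beta^2 p_n=0$ a.e.), and the finite-volume curvature spike can drift with $n$, so even your weakened a.e.-in-$\beta$ conclusion is not established by this argument. The standard repair avoids pointwise control of $\partial_\beta^2 p_n$ altogether: for $\delta>0$, applying $u_+\leqslant \delta^{-1}\log(1+e^{\delta u})$ inside the Gibbs bracket and using convexity of $F_n$ in $\beta$ gives
\begin{align*}
\nu\big(|I_n-\smallavg{I_n}|\big)
\leqslant
\beta\left({2\log 2\over \delta |V_n|}
+
\partial_\beta p_n(\beta+\delta)-\partial_\beta p_n(\beta-\delta)\right),
\end{align*}
and letting $n\to\infty$ first and $\delta\to 0$ second, convexity plus differentiability of $p(\cdot,\mu)$ at $\beta$ make the right-hand side vanish; this is essentially the mechanism behind the result the paper cites as \eqref{convergence}. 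With that repair your proof works, but it still yields the proposition only outside the countable set where $\beta\mapsto p(\beta,\mu)$ fails to be differentiable, whereas the statement claims every $(\beta,\mu)\in\mathcal{A}^c$, and $\mathcal{A}$ excludes non-differentiability in $\mu$ only. You flag this discrepancy honestly, and it is fair to note that the paper does not truly escape it either: in the cited references, the concentration \eqref{convergence} is proved under a differentiability hypothesis on the limiting free energy in the temperature-like parameter, so the same exceptional set in $\beta$ is hidden inside the citation rather than removed.
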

\begin{proof}	
	Let 
	\begin{align*}
	H_n(\sigma)=\beta \sum_{\langle xy\rangle}\sigma_x \sigma_y + h\sum_{x} g_x \sigma_x, \quad \sigma\in \{\pm 1\}^{V_n}\,,
	\end{align*} 
	be the Hamiltonian corresponding to the Gibbs measure \eqref{gibbs-measure2} of our RFIM.
	It is well-known that 
	\begin{align}\label{convergence}
	\nu\Bigg(\, \left|{H_n\over |V_n|}-
	\nu\biggl( {H_n\over|V_n|} \biggr) \right|\,\Bigg)  \nto 0\,;
	\end{align}
	see  \cite{PANCHENKO2010189, panchenko2013sherrington}. 
	Therefore follows from the second triangular inequality that 
	\begin{multline*}
	\nu\Bigg(\, \left|{H_n\over |V_n|}-
	\nu\biggl( {H_n\over|V_n|} \biggr) \right|\,\Bigg)
	\geqslant \
	\Biggl|\,
	\nu
	\Bigg(\, \biggl|
	{\beta\over|V_n|}
	\sum_{\langle xy\rangle}\big(\sigma_x\sigma_y-\nu(\sigma_x\sigma_y)\big)
	\biggr|\, \Bigg)
	-
	(\mu-h)\,
	\nu\big( | \Delta_n-
	\nu( \Delta_n) |\big)\,
	\Biggl|\,.
	\end{multline*}
	By combining this inequality with \eqref{convergence}, \eqref{condition-pert}, and 
	Lemma \ref{lemma-main}, the proof of the ergodic property follows.
\end{proof}

Our next step is to prove the Ghirlanda-Guerra identities for our model, which 
is precisely stated in the next lemma. In order to obtain these identities the Lemma \ref{lemma-main} plays 
an important role.

\begin{lemma}
	\label{ggid}
	Given $m\geqslant 2$, let $f:\mathbb{R}^{m(m-1)/2}\ra[-1,1]$ be a bounded measurable function
	of the overlaps \eqref{overlap} that not change with $n$.
	Then, under assumption \eqref{condition-pert}, the Ghirlanda-Guerra identities with $\psi\equiv{\rm Id}$ and $\boldsymbol{\alpha}=\boldsymbol{0}$ in \eqref{G-G} are satisfied at almost all $(\beta,\mu)$. 
	That is, if $f$ is as above, 
	\begin{align*} 
	\nu(f R_{1,m+1})
	- 
	\frac{1}{m}  \,
	\nu(f)
	\nu(R_{1,2})
	- 
	\frac{1}{m}\,
	\sum_{s=2}^m \nu(f R_{1,s})
	\nto 0\,,
	\quad \forall (\beta,h)\in {\mathcal A}^c\,.
	\end{align*}
\end{lemma}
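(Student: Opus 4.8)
The plan is to derive the identity from the self-averaging of the disorder energy $\Delta_n$, converting the coupling between $f$ and the field into overlaps by an approximate Gaussian integration by parts. Fix $m\ge2$, fix $(\beta,\mu)\in\mathcal{A}^c$, and recall $\|f\|_\infty\le1$. Placing $\Delta_n$ on the first replica, $\nu(f\Delta_n)=|V_n|^{-1}\sum_{x\in V_n}\mathbb{E}\big(g_x\,\smallavg{\sigma^1_x f}\big)$. I would integrate by parts in each variable $g_x$ separately, conditionally on the remaining (independent) disorder, applying Proposition \ref{prop-ap} with $k=2$ and $K=\varepsilon(\mu-h)^{-1}$. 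Since each differentiation in $g_x$ brings down a factor $(\mu-h)\sigma_x$ with the usual fresh-replica correction, differentiating $\smallavg{\sigma^1_x f}$ and using $(\sigma^1_x)^2=1$ yields, after summation,
\begin{align*}
\nu(f\Delta_n)=(\mu-h)\Big[\nu(f)+\sum_{s=2}^m\nu(f R_{1,s})-m\,\nu(f R_{1,m+1})\Big]+\mathcal{E}_n,
\end{align*}
where $\mathcal{E}_n$ gathers the non-Gaussian remainders furnished by Proposition \ref{prop-ap}.

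The second step replaces the left-hand side using self-averaging. Lemma \ref{lemma-main} gives $\nu(|\Delta_n-\nu(\Delta_n)|)\nto0$, so $|\nu(f\Delta_n)-\nu(f)\nu(\Delta_n)|\le\nu(|\Delta_n-\nu(\Delta_n)|)$. Taking $f\equiv1$ in the identity above evaluates the annealed average itself, $\nu(\Delta_n)=(\mu-h)[1-\nu(R_{1,2})]+\mathcal{E}^0_n$ (using weak exchangeability). Substituting both relations, the $\nu(f)$ terms recombine into $-\nu(f)\nu(R_{1,2})$, and dividing through by $(\mu-h)$ shows that the target expression $\nu(f R_{1,m+1})-\tfrac1m\nu(f)\nu(R_{1,2})-\tfrac1m\sum_{s=2}^m\nu(f R_{1,s})$ equals $[m(\mu-h)]^{-1}$ times $\nu(f)\nu(\Delta_n)-\nu(f\Delta_n)+\mathcal{E}_n-\nu(f)\mathcal{E}^0_n$. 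It therefore suffices to prove that this last quantity is $o(\mu-h)$.

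For the integration-by-parts remainders $\mathcal{E}_n,\mathcal{E}^0_n$ this is automatic: every differentiation of a Gibbs average in $g_x$ carries a factor $(\mu-h)$, so the derivative norms entering Proposition \ref{prop-ap} are $O(\mu-h)$ and the bound becomes $O(\mu-h)\,|V_n|^{-1}\sum_x\mathbb{E}(|g_x|^2:|g_x|\ge\varepsilon(\mu-h)^{-1})+O(\varepsilon(\mu-h))$; dividing by $(\mu-h)$ and using \eqref{condition-pert} and then $\varepsilon\to0$ kills both pieces. The decisive term is the self-averaging remainder $\nu(|\Delta_n-\nu(\Delta_n)|)/(\mu-h)$, which I would split into a disorder part $\mathbb{E}|\smallavg{\Delta_n}-\nu(\Delta_n)|$ — controlled by Proposition \ref{cond-main}, whence the restriction to $(\beta,\mu)\in\mathcal{A}^c$ — and a thermal part bounded by $\sqrt{\mathbb{E}(\smallavg{\Delta_n^2}-\smallavg{\Delta_n}^2)}$.

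The main obstacle is exactly this final division by $(\mu-h)$: the fluctuation estimates behind Lemma \ref{lemma-main} must be sharpened from $o(1)$ to the finer scale $o((\mu-h)^2)$. Here the hypothesis $(\mu-h)\sqrt{|V_n|}\nto\infty$ is indispensable. Writing $\mathbb{E}(\smallavg{\Delta_n^2}-\smallavg{\Delta_n}^2)=|V_n|^{-2}\sum_{x,y}\mathbb{E}(g_xg_y\,\smallavg{\sigma_x;\sigma_y})$, the diagonal $x=y$ contributes $O(1/|V_n|)$, whose square root divided by $(\mu-h)$ is $O\big(((\mu-h)\sqrt{|V_n|})^{-1}\big)\nto0$ precisely by that hypothesis. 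The off-diagonal part is handled by the bivariate integration by parts of Proposition \ref{prop-int-gen}, but now one must retain the $(\mu-h)$ factors produced by each derivative of $\smallavg{\sigma_x;\sigma_y}$ — rather than bounding $\mu-h$ by $\mu$ as in the crude correlation bounds $|\partial_{g_y}\smallavg{\sigma_x;\sigma_y}|\le2\mu$ — so that after division every surviving truncated-moment sum still tends to $0$ under \eqref{condition-pert}. Carrying these $(\mu-h)$-weighted estimates uniformly in $n$, and only afterward letting $\varepsilon\to0$, is the heart of the proof.
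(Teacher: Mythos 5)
Your proof skeleton is exactly the paper's: express $\nu(\Delta_n f)$ in terms of overlaps via Proposition \ref{prop-ap} and the derivative formula \eqref{derivative-main} (your first display is the paper's \eqref{approx} after writing $R_{1,1}=1$), replace $\nu(\Delta_n f)$ by $\nu(\Delta_n)\nu(f)$ using \eqref{first-part} and Lemma \ref{lemma-main}, evaluate $\nu(\Delta_n)$ through the case $f\equiv 1$, $m=1$, and combine; your algebra for the combination is correct. You go beyond the paper's write-up in one important respect: you make explicit that, since $h\nto\mu$, the final step divides by $m(\mu-h)\to 0$, so every remainder must be $o(\mu-h)$ and the thermal variance $o\bigl((\mu-h)^2\bigr)$ --- the paper's proof is silent on this and concludes directly from ``$(\mu-h)$ times the Ghirlanda--Guerra discrepancy tends to $0$'' with only $o(1)$ remainders in hand. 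Your handling of the univariate remainders (keeping the factors $(\mu-h)^j$ in \eqref{derivative-main}, and using $\mathbb{E}(|g|^2:|g|\geqslant \varepsilon(\mu-h)^{-1})\leqslant \varepsilon^{-1}(\mu-h)\,\mathbb{E}(|g|^3:|g|\geqslant \varepsilon(\mu-h)^{-1})$) and of the diagonal term (where $(\mu-h)\sqrt{|V_n|}\nto\infty$ indeed enters) is correct.

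However, the two estimates on which what you call ``the heart of the proof'' rests are not established, and the second fails as stated. (i) For the disorder part you need $\mathbb{E}|\smallavg{\Delta_n}-\nu(\Delta_n)|=o(\mu-h)$, but you cite Proposition \ref{cond-main}, which gives only $o(1)$: its proof is a convexity-plus-variance argument with no rate, so it cannot be quoted at the scale $o(\mu-h)$, and your proposal never sharpens it. (ii) For the off-diagonal thermal part, retaining the $(\mu-h)$ factors in Proposition \ref{prop-int-gen} is not enough. That proposition contains bulk terms proportional to $K_1K_2\bigl\|\partial^{k}f/\partial y^{k}\bigr\|_\infty$; with $K_1=K_2=\varepsilon(\mu-h)^{-1}$ and the refined bound $\bigl\|\partial^{2}f_{x,y}/\partial v^{2}\bigr\|_\infty\leqslant 6(\mu-h)^2$, these terms are of order $\varepsilon^2$, independent of $n$: the two powers of $K$ exactly cancel the two powers of $(\mu-h)$ gained from the derivatives. (In the univariate Proposition \ref{prop-ap} only one power of $K$ appears, which is why that part of your argument does work.) Hence your bound for $\mathbb{E}(\smallavg{\Delta_n^2}-\smallavg{\Delta_n}^2)/(\mu-h)^2$ diverges for every fixed $\varepsilon$, and sending $\varepsilon\to 0$ only after $n\to\infty$ repairs nothing; choosing instead $\varepsilon=\varepsilon_n=o(\mu-h)$ makes the truncation threshold $\varepsilon_n(\mu-h)^{-1}$ bounded, so the truncated-moment sums in \eqref{condition-pert} no longer vanish. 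Moreover, even the leading term $|V_n|^{-2}\sum_{x\neq y}\mathbb{E}\,\partial^2_{uv}f_{x,y}$ is only $O\bigl((\mu-h)^2\bigr)$ under sup-norm bounds; pushing it to $o\bigl((\mu-h)^2\bigr)$ requires genuinely additional ingredients, e.g.\ the identity $\partial^2_{g_xg_y}\smallavg{\sigma_x;\sigma_y}=-2(\mu-h)^2\smallavg{\sigma_x;\sigma_y}\bigl(\smallavg{\sigma_x;\sigma_y}-2\smallavg{\sigma_x}\smallavg{\sigma_y}\bigr)$, FKG positivity of $\smallavg{\sigma_x;\sigma_y}$, and the magnetization concentration of Proposition \ref{mainlmm-il}, which together give $|V_n|^{-2}\sum_{x,y}\mathbb{E}\smallavg{\sigma_x;\sigma_y}=\mathbb{E}\bigl(\smallavg{m(\sigma)^2}-\smallavg{m(\sigma)}^2\bigr)\nto 0$. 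None of this appears in your sketch, so the decisive estimate $\nu(|\Delta_n-\nu(\Delta_n)|)=o(\mu-h)$ remains unproved --- a gap you share with, but at least diagnose more honestly than, the paper's own proof.
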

\begin{proof}
	Since $\|f\|_\infty\leqslant 1$, we have 
	\begin{align}\label{first-part}
	\left|\nu\big(\Delta_{n}(\sigma^1) f\big)-\nu\big(\Delta_{n}(\sigma^1)\big)\nu\big(f\big)\right|
	\leqslant 
	\nu\big(|\Delta_{n} - \nu(\Delta_{n}) | \big)\,,
	\end{align}
	where $\Delta_{n}$ is as in \eqref{delta-n}.
	
	On the other hand, let $\langle\,\cdot\,\rangle_{g_x=u}$ be the Gibbs expectation defined by setting 
	$g_x$ in $\langle\,\cdot\,\rangle$ to be $u$ and
	$F_x(u)\coloneqq \langle\,\sigma_x^1 \, f\,\rangle_{g_x=u}$.
	Using \eqref{en-int},
	a  straightforward computation shows that 
	\begin{align}\label{derivative-main}
	{\partial^j F_x(u)\over \partial u^j}
	=
	(\mu-h)^j
	\biggl\langle\sigma_x^{1} \boldsymbol{\cdot} \Big(\sum_{\ell=1}^m\sigma_{x}^\ell-m\sigma_x^{m+1}\Big)^j f \biggr\rangle_{g_x=u},
	\quad j=1,2,\ldots\,.
	\end{align}
	We remark that the above formula has also appeared in Chen (2019) \cite{YTchen2019}.
	
	Since $|{\partial^j F_x(u)\over \partial u^j}|\leqslant(2m\mu)^j$ and $\mathbb{E}g_x^2=1$, 
	the integration by parts formula (Proposition \ref{prop-ap})
	with $f_x(u)\coloneqq \mathbb{E}F_x(u)$, $k=2$ and 
	$K=\varepsilon (\mu-h)^{-1}$, gives
	\begin{align*}
	\biggl|\mathbb{E}g_x f_x-\mathbb{E}{\text{d} f_x(g_x)\over \text{d} u}\biggr|
	\leqslant
	4m\mu(1+2m\mu)  \, \mathbb{E}(|g_x|^2:|g_x|\geqslant \varepsilon (\mu-h)^{-1})
	+
	6\varepsilon m^2\mu,
	\end{align*}
	for any $\varepsilon> 0$. 
	Dividing the above inequality by $|V_n|$, summing over $x\in V_n$, 
	applying the triangle inequality, and using \eqref{derivative-main} with $j=1$ 
	we get that  
	\begin{align*}
	\biggl|
	\nu\big(\Delta_n(\sigma^1) f\big)
	&- 
	(\mu-h)
	\nu\biggl(\Big(\sum_{\ell=1}^{m}R_{1,\ell}-mR_{1,m+1}\Big)f \biggr) 
	\biggr|
	\\
	&
	\leqslant
	{1\over|V_n|}\sum_{x} 
	\biggl|\mathbb{E}g_x f_x-\mathbb{E}{\text{d} f_x(g_x)\over \text{d} u}\biggr|
	\\[0,1cm]
	&\leqslant
	4m\mu(1+2m\mu)  \, 
	{1\over|V_n|}\sum_{x}\mathbb{E}(|g_x|^2:|g_x|\geqslant \varepsilon (\mu-h)^{-1})
	+
	6\varepsilon m^2\mu.
	\end{align*}
	Therefore, from both the assumption \eqref{condition-pert} and arbitrariness of $\varepsilon$, it follows that	
	\begin{align}\label{approx}
	\limsup_{n\to\infty}
	\sup_{f}
	\left|
	\nu\big(\Delta_n(\sigma^1) f\big)
	-
	(\mu-h)\nu\biggl(\Big(\sum_{\ell=1}^{m}R_{1,\ell}-mR_{1,m+1}\Big)f \biggr)	
	\right|=0\,.
	\end{align}
	In the particular case where $f=1$ and $m=1$, we obtain that 
	$
	|\nu(\Delta_{n}(\sigma^1))-(\mu-h)\nu(R_{1,1}-R_{1,2})|\nto 0.
	$
	By combining this with \eqref{first-part} and \eqref{approx}, it follows from Lemma \ref{lemma-main} that
	\begin{align*}
	\biggr|
	\nu(R_{1,1}-R_{1,2}) \nu\big(f\big)
	-
	\nu\biggl(\Big(\sum_{\ell=1}^{m}R_{1,\ell}-mR_{1,m+1}\Big)f \biggr)
	\biggr|
	\nto 0.
	\end{align*}
\end{proof}

The last ingredient of the proof of Item (1) of Theorem \ref{rsbthm} 
is the following result by Auffinger and Chen (2016) \cite{Auffinger}
which proves the self-averaging of the spin overlap.

\begin{proposition}[\cite{Auffinger}]
	\label{mainlmm-il}
	Under assumption \eqref{condition-pert},
	for any $(\beta,\mu)\in (0,\infty)^2$,
	\begin{align*}
	\mathbb{E}\left(\smallavg{R_{1,2}^2}
	- 
	\smallavg{R_{1,2}}^2 
	\right)
	=
	\nu(R_{1,2}-\smallavg{R_{1,2}})^2
	\nto 0;
	\quad 
	\nu\big(m(\sigma)-\smallavg{m(\sigma)}\,\big)^2\nto 0\,,
	\end{align*}
	where $m(\sigma)= \sum_x\sigma_x/|V_n|$ define the magnetization.
\end{proposition}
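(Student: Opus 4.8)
The plan is to dispose of the first displayed identity immediately, to reduce the overlap statement to the magnetization statement by a monotonicity (FKG) inequality, and then to establish the magnetization self-averaging by the convexity scheme already deployed for $\Delta_n$ in Proposition \ref{cond-main} and Lemma \ref{lemma-main}. The first equality is nothing but the definitions $\nu(\cdot)=\mathbb{E}\langle\,\cdot\,\rangle$ and $\langle(R_{1,2}-\langle R_{1,2}\rangle)^2\rangle=\langle R_{1,2}^2\rangle-\langle R_{1,2}\rangle^2$.

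First I would record, using independence of the replicas, that $\langle R_{1,2}\rangle=|V_n|^{-1}\sum_x\langle\sigma_x\rangle^2$ and $\langle R_{1,2}^2\rangle=|V_n|^{-2}\sum_{x,y}\langle\sigma_x\sigma_y\rangle^2$, whence
\[
\langle R_{1,2}^2\rangle-\langle R_{1,2}\rangle^2
=
{1\over|V_n|^2}\sum_{x,y}\langle\sigma_x;\sigma_y\rangle\,\big(\langle\sigma_x\sigma_y\rangle+\langle\sigma_x\rangle\langle\sigma_y\rangle\big),
\]
with $\langle\sigma_x;\sigma_y\rangle$ the truncated two-point correlation of Lemma \ref{lemma-main}. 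Since the interaction in \eqref{gibbs-measure2} is ferromagnetic ($\beta>0$), for every realization of the disorder the Gibbs measure is FKG, so $\langle\sigma_x;\sigma_y\rangle\geqslant 0$; combined with $|\langle\sigma_x\sigma_y\rangle+\langle\sigma_x\rangle\langle\sigma_y\rangle|\leqslant 2$ this yields the disorder-wise bound
\[
0\leqslant\langle R_{1,2}^2\rangle-\langle R_{1,2}\rangle^2\leqslant 2\big(\langle m(\sigma)^2\rangle-\langle m(\sigma)\rangle^2\big).
\]
Taking expectations, the overlap statement reduces to proving $\nu\big((m(\sigma)-\langle m(\sigma)\rangle)^2\big)\nto 0$.

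For the magnetization the natural route is to mimic the treatment of $\Delta_n$ but with respect to a uniform external field: let $Z_n(t)$ carry the extra term $t\sum_{x}\sigma_x$, write $\nu_t$ for the corresponding average, and set $\psi_n(t)=|V_n|^{-1}\log Z_n(t)$, $p_n(t)=\mathbb{E}\psi_n(t)$. Then $\psi_n$ is convex, $\psi_n'(t)=\langle m(\sigma)\rangle$ and $\psi_n''(t)=|V_n|\big(\langle m(\sigma)^2\rangle-\langle m(\sigma)\rangle^2\big)$, so that $\nu_t\big((m(\sigma)-\langle m(\sigma)\rangle)^2\big)=p_n''(t)/|V_n|$. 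Lemma \ref{Lemma-1} (applied to $F_n(t)$, whose disorder is unchanged by the deterministic field $t$) gives $\mathrm{Var}(F_n)=o(|V_n|^2)$, so $\psi_n(t)$ concentrates around $p_n(t)$; convexity transfers this concentration to $\psi_n'(t)=\langle m(\sigma)\rangle$, giving the self-averaging of the magnetization over the disorder exactly as in Proposition \ref{cond-main}. The thermal part is moreover cheap in the averaged sense, since
\[
\int_{0}^{\delta}\nu_t\big((m(\sigma)-\langle m(\sigma)\rangle)^2\big)\,\mathrm{d}t
=
{1\over|V_n|}\big(\nu_\delta(m)-\nu_0(m)\big)\nto 0
\]
because $|\nu_t(m)|\leqslant 1$.

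I expect the crux to be precisely the passage from this integrated (in $t$) control to the pointwise vanishing of the thermal variance $p_n''(0)/|V_n|$ at the physical value $t=0$, and for every $(\beta,\mu)$ rather than almost every one; convexity alone controls the curvature only for almost every $t$. This is where the random field is essential: its perturbative strength, encoded in $(\mu-h)\sqrt{|V_n|}\nto\infty$ in \eqref{condition-pert}, is what forces the Gibbs fluctuations of $m(\sigma)$ to concentrate at each fixed parameter, and this is exactly the statement supplied by Auffinger and Chen \cite{Auffinger}. Once it is in hand, the FKG reduction above closes the overlap claim, and both self-averaging statements follow.
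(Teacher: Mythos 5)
There is no proof of this proposition in the paper for your attempt to be measured against: the result is imported verbatim from Auffinger and Chen \cite{Auffinger} (that is what the bracketed citation in its statement signifies), and it enters the paper only as an external ingredient for the proof of Item 1. Judged on its own terms, your proposal splits into one piece that is correct and genuinely different from the cited source, and one piece that is not proven. The correct piece is the FKG reduction: since $\beta>0$ and the random field enters \eqref{gibbs-measure2} only through single-site terms, the Gibbs measure satisfies the FKG inequality for every disorder realization, so $\langle\sigma_x;\sigma_y\rangle\geqslant 0$, and your computation
\begin{align*}
\langle R_{1,2}^2\rangle-\langle R_{1,2}\rangle^2
=
\frac{1}{|V_n|^2}\sum_{x,y}\langle\sigma_x;\sigma_y\rangle
\big(\langle\sigma_x\sigma_y\rangle+\langle\sigma_x\rangle\langle\sigma_y\rangle\big)
\leqslant
2\big(\langle m(\sigma)^2\rangle-\langle m(\sigma)\rangle^2\big)
\end{align*}
is valid and does reduce the overlap limit to the magnetization limit. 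This route is unavailable in the setting of \cite{Auffinger}, which treats mixed $p$-spin models with no ferromagnetic/FKG structure and must obtain both limits from one and the same perturbative argument; in the RFIM your observation is a real simplification.

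The gap is the magnetization limit itself, which after your reduction carries all of the content, and you have diagnosed it accurately: the convexity-in-$t$ scheme yields disorder self-averaging of $\langle m(\sigma)\rangle$ only at parameters where the limiting pressure is differentiable (hence only for almost every parameter, as in Proposition \ref{cond-main}), and it yields thermal control only after integration in $t$, whereas the proposition asserts $\nu\big((m(\sigma)-\langle m(\sigma)\rangle)^2\big)\nto 0$ at \emph{every} $(\beta,\mu)\in(0,\infty)^2$, i.e.\ pointwise at $t=0$. Your closing move --- ``this is exactly the statement supplied by Auffinger and Chen'' --- invokes the second half of the very proposition being proven, so the proposal as written is a reduction plus a citation, not an independent proof. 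The missing idea is precisely the one that does the work in \cite{Auffinger}: the vanishing field strength is itself used as the averaging parameter, with $\mu-h\nto 0$ making the perturbation negligible for the free energy density while $(\mu-h)\sqrt{|V_n|}\nto \infty$ and the truncated third-moment condition in \eqref{condition-pert} make the admissible window of field strengths wide enough, after rescaling, to force concentration at each fixed $(\beta,\mu)$ rather than almost every one. To be fair, the paper also cites both halves without proof, so your treatment is no less complete than the paper's own --- but it should be recorded as a reduction to the cited result, not as a proof of it.
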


\vspace*{0,3cm}
\noindent
{\it Proof of Item 1 of Theorem \ref{rsbthm}.}   
At this point, we already have established the validity of the key ingredients: the self-averaging of the overlap (Proposition \ref{mainlmm-il})  and the Ghirlanda-Guerra identities (Lemma \ref{ggid}). 
By a quite standard argument (see, e.g., Chatterjee (2015) \cite{chatterjee2015disorder}) 
these two mentioned facts imply the replica symmetry breaking does not occur in the RFIM. 
In other words, the spin overlap is concentrated at its expectation value 
and then the ultrametricity property follows.
%
\qed


\section{Proof of Item 2}\label{proof-2}

To prove this item, which is the ultrametricity property, 
the strategy is to combine the results of the previous section with the following known 
results in the literature, concerning to mixed $p$-spin models for  $p\geqslant 2$:
\begin{itemize}
	\item  the main theorem of Panchenko (2010) \cite{PANCHENKO2010189}; 
	\item the universality of Ghirlanda-Guerra identities in mixed $p$-spin models; see Chen (2019) \cite{YTchen2019};
	\item and the main theorem of Panchenko (2011) \cite{Panchenkoultra}.
\end{itemize}

\smallskip 
For any $n\geqslant 1$, let $\widetilde{\xi}_{x_1,\ldots,x_p} \coloneqq g_x\delta_{p,1}+{\xi}_{x_1,\ldots,x_p}(1-\delta_{p,1})$, where $\delta$ is the Kronecker delta function, and 
\begin{align*}
\Delta_{n;p}
\coloneqq{1\over\vert V_n\vert^{(p+1)/2}}\sum_{x_1,\ldots,x_p} \widetilde{\xi}_{x_1,\ldots,x_p} 
\sigma_{x_1}\cdots \sigma_{x_p}, \quad p\geqslant 1,
\end{align*}
the part of the energy function \eqref{gibbs-measure3} due to the disorder.
When $p=1$, $\Delta_{n;p}$ coincides with the random function $\Delta_{n}$ given in \eqref{delta-n}.

Next, we prove a very important and technical result used 
to obtain the extended Ghirlanda-Guerra identities, which is the following limit 
\begin{align*}
\nu_{\boldsymbol{\alpha}}\big(|\Delta_{n;p} - \nu_{\boldsymbol{\alpha}}(\Delta_{n;p}) | \big)\nto 0.
\end{align*}

This result is established in Panchenko (2010) \cite{PANCHENKO2010189} and Auffinger and Chen (2018) \cite{Auffinger-other}.  
Actually, in \cite{Auffinger-other} a stronger result is obtained but at the cost of 
requiring a stronger hypothesis than the one used in \cite{PANCHENKO2010189}.
To be more precise, in \cite{PANCHENKO2010189} Panchenko 
obtain the above limit under the following assumptions: 
\begin{itemize}
	\item[a)] 
	$
	\vert V_n\vert^{-1}\,
	\mathbb{E}
	\big\vert\log Z_{\boldsymbol{\alpha}}(t)- \mathbb{E}\log Z_{\boldsymbol{\alpha}}(t) 
	\big\vert\nto 0,$
	\item[b)]  
	$
	\vert V_n\vert^{-1}\,
	\mathbb{E}\log Z_{\boldsymbol{\alpha}}(t) \nto P(t)$ \ \text{in some neighborhood of} \ $t$, and
	\item[c)] $P(t)$ \ \text{is differentiable at} \ $t$. 
\end{itemize}
On the other hand, this limit is obtained in Auffinger and Chen's work \cite{Auffinger-other} 
under the following assumptions:
\begin{itemize}
	\item[i)] there exists a nonrandom
	function $P: I=(t - \varepsilon, t + \varepsilon) \to \mathbb{R}$, for some $\varepsilon>0$, such that for any $t'\in I$, $\vert V_n\vert^{-1}\log Z_{\boldsymbol{\alpha}}(t')\nto P(t')$ almost surely, and 
	\item[ii)]  $P(t)$ is differentiable at $t$,
\end{itemize}

We can show that conditions i) and ii) implies a), b) and c).
But, as observed before, the conclusion in \cite{Auffinger-other} is
stronger than the one in \cite{PANCHENKO2010189}. 
However, in our setting is more convenient to establish the validity of 
conditions a), b) and c). 

\begin{proposition}\label{remain-proof} For any $p\geqslant 1$ 
	\begin{align}\label{convergence-fin}
	\nu_{\boldsymbol{\alpha}}\big(|\Delta_{n;p} - \nu_{\boldsymbol{\alpha}}(\Delta_{n;p}) | \big)\nto 0\,.
	\end{align}
\end{proposition}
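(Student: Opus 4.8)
The plan is to reduce \eqref{convergence-fin} to Panchenko's conditions a), b) and c) stated above and then to invoke the main theorem of \cite{PANCHENKO2010189}. To this end I introduce, for a real parameter $t$, the partition function $Z_{\boldsymbol{\alpha}}(t)$ obtained from the Gibbs weight \eqref{gibbs-measure3} by adding $t\,H_{n;p}(\sigma)=t\,|V_n|\,\Delta_{n;p}(\sigma)$ to the exponent, where the identity $H_{n;p}=|V_n|\,\Delta_{n;p}$ is immediate from \eqref{def_Hper}. Then $|V_n|^{-1}\,\frac{\mathrm{d}}{\mathrm{d}t}\log Z_{\boldsymbol{\alpha}}(t)=\langle\Delta_{n;p}\rangle_{\boldsymbol{\alpha}}$ and, averaging over the disorder, $|V_n|^{-1}\,\frac{\mathrm{d}}{\mathrm{d}t}\mathbb{E}\log Z_{\boldsymbol{\alpha}}(t)=\nu_{\boldsymbol{\alpha}}(\Delta_{n;p})$ at $t=0$. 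Since $t\mapsto\log Z_{\boldsymbol{\alpha}}(t)$ is convex, the statement follows from the classical convexity mechanism: concentration of the free energy (condition a), convergence of its mean (condition b) and differentiability of the limit (condition c) together force the random derivative $\langle\Delta_{n;p}\rangle_{\boldsymbol{\alpha}}$ to concentrate at $\nu_{\boldsymbol{\alpha}}(\Delta_{n;p})$, which is exactly \eqref{convergence-fin}. For $p=1$ the variable $\Delta_{n;p}$ reduces to $\Delta_n$ of \eqref{delta-n} and the argument is the one already run in Lemma \ref{lemma-main} (now with $\nu_{\boldsymbol{\alpha}}$ in place of $\nu$); hence the genuinely new content is the case $p\geqslant 2$.

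The first and main task is condition a), namely $|V_n|^{-1}\,\mathbb{E}\big|\log Z_{\boldsymbol{\alpha}}(t)-\mathbb{E}\log Z_{\boldsymbol{\alpha}}(t)\big|\nto 0$. I would deduce it from a variance estimate together with $\mathbb{E}|X-\mathbb{E}X|\leqslant\sqrt{\mathrm{Var}(X)}$, so that it suffices to show $\mathrm{Var}(\log Z_{\boldsymbol{\alpha}}(t))=o(|V_n|^2)$. The estimate is produced by the interpolation scheme of Lemma \ref{Lemma-1}: one replaces each family of disorder variables by an independent copy along a path $s\in[0,1]$, differentiates the generating function $\gamma_n(s)=\mathbb{E}[\mathbb{E}^{*}\log Z_{\boldsymbol{\alpha}}(t)]^2$, and bounds every resulting term with the non-Gaussian integration by parts of Proposition \ref{prop-ap} (with $k=2$). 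For the $p$-spin family the effective coupling of each variable $\xi_{x_1,\ldots,x_p}$ is $t\,|V_n|^{-(p-1)/2}$, playing precisely the role that $(\mu-h)$ played for the random field; choosing $K=\varepsilon\,|V_n|^{(p-1)/2}$ makes the tail moments $\mathbb{E}(|\xi|^2:|\xi|\geqslant K)$ vanish as $n\to\infty$, while, writing $I=(x_1,\dots,x_p)$, the leading term $\sum_{I}\mathbb{E}\big(\mathbb{E}^{*}\partial_{\xi_I}\log Z_{\boldsymbol{\alpha}}(t)\big)^2$ is $O(|V_n|)$. The delicate point is that the residual integration by parts error is of critical order $O(\varepsilon\,|V_n|^2)$, which is absorbed only because $\varepsilon>0$ is arbitrary.

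Next I would verify conditions b) and c). Convergence $|V_n|^{-1}\mathbb{E}\log Z_{\boldsymbol{\alpha}}(t)\nto P(t)$ in a neighborhood of $t$ follows from universality: by the results of Chen \cite{YTchen2019} the non-Gaussian disorder $\xi$ may be replaced by Gaussian disorder up to a normalized error that vanishes, reducing the question to a Gaussian mixed model where the limit is known to exist. Differentiability of the convex limit $P$ at $t$ (condition c) then holds outside a countable set of values of $t$, exactly as the countable set $\mathcal{A}$ was discarded in Section \ref{proof}; since \eqref{convergence-fin} is needed only at such points, this is harmless. With a), b) and c) in hand, the main theorem of \cite{PANCHENKO2010189} yields \eqref{convergence-fin}.

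The step I expect to be the main obstacle is condition a) for $p\geqslant 2$. In contrast with the single-site random field of Lemma \ref{Lemma-1}, the $p$-spin interaction sums over $|V_n|^{p}$ indices, and this combinatorial factor is balanced \emph{exactly} by the normalization $|V_n|^{-(p-1)/2}$, so the variance sits at the borderline scale $|V_n|^2$ rather than strictly below it. Pushing the bound to $o(|V_n|^2)$ therefore requires both a careful, family-by-family application of Proposition \ref{prop-ap} and the arbitrariness of $\varepsilon$, and one must check that these estimates are uniform for $t$ in a neighborhood so that the convexity argument indeed applies.
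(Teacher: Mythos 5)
Your proposal has the same skeleton as the paper's proof: the case $p=1$ is delegated to Lemma \ref{lemma-main}, and for $p\geqslant 2$ the claim \eqref{convergence-fin} is reduced to Panchenko's hypotheses a), b), c) and to the main theorem of \cite{PANCHENKO2010189}. You differ in how a) and c) are verified. For a), the paper performs no interpolation at all: it invokes Chen \cite{YTchen2019}, who proves $\vert V_n\vert^{-2}\,\mathbb{E}\vert \log Z_{\boldsymbol{\alpha}}(t)-\mathbb{E}\log Z_{\boldsymbol{\alpha}}(t)\vert^{2}\nto 0$ by a martingale-difference argument based on Burkholder's inequality \cite{Burkholder73}, which gives a) by Cauchy--Schwarz. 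Your alternative, re-running the interpolation of Lemma \ref{Lemma-1} on the $\xi$-disorder with Proposition \ref{prop-ap} and $K=\varepsilon\vert V_n\vert^{(p-1)/2}$, is heavier but its bookkeeping is essentially right: the leading term is $O(\vert V_n\vert)$, the truncated tails vanish because $K\to\infty$, and the critical residual term of order $\varepsilon\vert V_n\vert^{2}$ carries, at $t=0$ where you apply it, the cube of the coupling $c_n\alpha_p2^{-p}$, so it is in fact $o(\vert V_n\vert^{2})$ even before letting $\varepsilon\downarrow 0$. Citing \cite{YTchen2019} instead of Lemma 8 of \cite{CARMONA2006215} for universality in b) is immaterial.

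The genuine gap is in your treatment of c). You obtain differentiability of the convex limit $P$ only outside a countable set of values of $t$, and you dismiss the exceptional set as harmless by analogy with discarding $\mathcal{A}$. The analogy fails: Proposition \ref{remain-proof}, and Item 2 of the Theorem which rests on it, are asserted for the \emph{given} sequence $\boldsymbol{\alpha}$, an arbitrary datum of the model \eqref{gibbs-measure3} subject only to $\vert\alpha_p\vert\leqslant 1$. If the given $\alpha_p$ happens to lie in your exceptional set, your argument proves nothing for that model, and Section \ref{proof-2} never re-selects $\boldsymbol{\alpha}$ generically, so the loss cannot be repaired downstream without restructuring the deduction of the extended Ghirlanda--Guerra identities (as is done, e.g., in \cite{panchenko2013sherrington}). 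The paper incurs no exceptional set at this step: by Carmona--Hu universality the limit coincides with the Gaussian-disorder limit, which it asserts to be differentiable at \emph{every} $t$.

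The fix is available inside your own setup, and you should carry it out instead of appealing to genericity. Since you add $t\,H_{n;p}$ on top of \eqref{gibbs-measure3}, you need c) only at $t=0$; for Gaussian $\xi$'s, integration by parts gives $\vert V_n\vert^{-1}\,\mathbb{E}\langle H_{n;p}\rangle_{t}=(c_n\alpha_p 2^{-p}+t)\,\mathbb{E}\langle 1-R_{1,2}^{p}\rangle_{t}$, hence $\vert V_n\vert^{-1}\,\vert\mathbb{E}\langle H_{n;p}\rangle_{t}\vert\leqslant 2(c_n 2^{-p}+\vert t\vert)$ and, granting b), $\vert P(t)-P(0)\vert\leqslant t^{2}$; thus $P$ is differentiable at $0$ with $P'(0)=0$, with no exceptional set, and universality transports this to the non-Gaussian disorder. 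A secondary caveat: your justification of b), namely that for Gaussian disorder ``the limit is known to exist,'' needs care for $t\neq 0$, because your family then couples the lattice RFIM to a \emph{non-vanishing} mean-field $p$-spin term, a hybrid model whose thermodynamic limit is not in the cited literature; this must be addressed before the appeal to \cite{PANCHENKO2010189}, which requires convergence in a neighborhood of $0$, is complete.
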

\begin{proof}
	For almost all $(\beta,\mu)$,
	the convergence in \eqref{convergence-fin} is proved in Lemma \ref{lemma-main}
	for $p = 1$. 
	For $p\geqslant 2$, let $Z_{\boldsymbol{\alpha}}(t)$ be the partition function 
	associated to the Hamiltonian $c_{n}\alpha_p  2^{-p}
	H_{n;p}(\sigma)$ with $\alpha_p=t$.
	Under the condition of two matching 
	moments of the disorder $({\xi}_{x_1,\ldots,x_p})$, 
	Lemma 8 of Carmona and Hu (2006) \cite{CARMONA2006215} 
	shows that the limit of the free-energy function
	$\vert V_n\vert^{-1}\log Z_{\boldsymbol{\alpha}}(t)$ 
	does not depend on the particular distribution of environment. 
	Therefore this limit is everywhere differentiable in $t$, for fixed $t$. 
	By using the martingale difference argument by Burkholder’s inequality \cite{Burkholder73}
	and integration by parts formula, Chen (2019) \cite{YTchen2019} has proved that
	\begin{align*}
	\vert V_n\vert^{-2}
	\mathbb{E}
	\big\vert
	\log Z_{\boldsymbol{\alpha}}(t)-\mathbb{E}\log Z_{\boldsymbol{\alpha}}(t) 
	\big\vert^2\nto 0.
	\end{align*}
	Therefore, the hypotheses a), b), and c) of the main theorem in \cite{PANCHENKO2010189}
	are satisfied for the mixed $p$-spin model with $p\geqslant 2$. 
	Consequently,  \eqref{convergence-fin} is also valid for all $p\geqslant 2$. 
	
\end{proof}
%

\vspace*{0,3cm}
\noindent
{\it Proof of Item 2 of Theorem \ref{rsbthm}.} 
In order to derive the extended Ghirlanda-Guerra identities \eqref{G-G}, 
we can assume, without loss of generality, that $\vert\psi\vert\leqslant 1$.
Since the space of real-valued compactly supported continuous functions on $[-1, 1]$ is dense in the space of Lebesgue integrable functions on $[-1, 1]$ and 
since any continuous function can be uniformly approximated on $[-1, 1]$ by a polynomial, it is sufficient to prove the extended
Ghirlanda-Guerra identities \eqref{G-G} for any higher moments of the overlap, that is, 
for all $p\geqslant 1$,
\begin{align}\label{G-G-2-3}
\nu_{\boldsymbol{\alpha}}(f R_{1,m+1}^p)
- 
\frac{1}{m}  \,
\nu_{\boldsymbol{\alpha}}(f)
\nu_{\boldsymbol{\alpha}}(R_{1,2}^p)
- 
\frac{1}{m}
\sum_{\ell=2}^m \nu_{\boldsymbol{\alpha}}(f R_{1,\ell}^p)
\nto 0, \quad m\geqslant2\,,
\end{align}
for any bounded measurable function $f=f\big((R_{\ell,\ell'})_{1\leqslant \ell,\ell'\leqslant m}\big):\mathbb{R}^{m(m-1)/2}\ra[-1,1]$. 

Indeed, firstly we consider the case $p=1$. In this case, the validity of identities \eqref{G-G-2-3}, 
follows from Proposition \ref{remain-proof} and Lemma \ref{ggid}.
Secondly, the validity of \eqref{G-G-2-3} for $p\geqslant 2$, is consequence of Proposition \ref{remain-proof} 
and the integration by parts formula (as in the proof of Theorem 2.1-Step 2, in Chen (2019) \cite{YTchen2019}).
Therefore, in the infinite volume limit, the Gibbs measure of the RFIM defined in  
\eqref{condition-pert}-\eqref{def_Hper}  
satisfies the extended Ghirlanda-Guerra identities \eqref{G-G} 
for any $p\geqslant 1$.

Finally, since $R= (R_{\ell,\ell'})_{\ell,\ell'\geqslant 1}$ is a Gram-de Finetti matrix \cite{DSFinetti82} and since the extended Ghirlanda-Guerra identities \eqref{G-G} are satisfied, 
the main theorem of Panchenko (2011) \cite{Panchenkoultra} establishes the ultrametricity.
\qed


\section*{Acknowledgements}
We would like to thank S. Chatterjee for raising the question on 
the ultrametricity property in the RFIM in general non-Gaussian disorders. 
We would like to offer our special thank to
L. Cioletti for many valuable comments and
careful reading of this manuscript.
This study was financed in part by the Coordena\c{c}\~{a}o de 
Aperfei\c{c}oamento de Pessoal de N\'{i}vel Superior
- Brasil (CAPES) - Finance Code 001.
Jamer Roldan was supported by CNPq.

\section*{Appendix}
\begin{proof}[Proof of Proposition \ref{prop-int-gen}]
	Let $g(x,y)\coloneqq xf(x,y)$. 
	By Taylor expanding the function $g$ at $(k-1)$-th and $k$-th orders, we have
	\begin{align}
	yg(x,y)
	&=
	g(x,0)y
	-
	{\partial^{k-1} g(x,0)\over\partial y^{k-1}} \, {y^k\over (k-1)!}
	+
	\sum_{n=1}^{k-1} {\partial^n g(x,0)\over\partial y^n} \, {y^{n+1}\over n!}   \label{A1}
	\\[0,2cm]
	& \quad +
	{\partial^{k-1} g(x,a(y))\over\partial y^{k-1}} \, {y^k\over (k-1)!} \,, \nonumber   
	\\[0,2cm]
	&=
	g(x,0)y
	+
	\sum_{n=0}^{k-1}{\partial^n g(x,0)\over\partial y^n} \, {y^{n+1}\over n!}
	+
	{\partial^{k} g(x,b(y))\over\partial y^{k}} \, {y^{k+1}\over k!}.    \label{A2}
	\end{align}
	Similarly, considering the Taylor expansion of ${\partial g(x,y)\over\partial y}$ 
	up to its $(k-1)$-th order, we get 
	\begin{align}
	{\partial g(x,y)\over\partial y}
	&=
	\sum_{n=1}^{k-1} {\partial^n g(x,0)\over\partial y^n} \, {y^{n-1}\over (n-1)!}
	+
	{\partial^k g(x,c(y))\over\partial y^k} \, {y^{k-1}\over (k-1)!}\,,   \label{A3}
	\end{align}
	where $a(y), b(y), c(y)$ are some functions depending only on $y$.
	From \eqref{A1} and \eqref{A3} it follows that 
	\begin{align}\label{A}
	yg(x,y)-{\partial g(x,y)\over\partial y}
	&=
	g(x,0)y
	-
	{\partial^{k-1} g(x,0)\over\partial y^{k-1}} \, {y^k\over (k-1)!}
	+
	\sum_{n=1}^{k-1} {\partial^n g(x,0)\over\partial y^n} \, 
	h_n(y)
	\\[0,2cm]
	&\quad +
	{\partial^{k-1} g(x,a(y))\over\partial y^{k-1}} \, {y^k\over (k-1)!}
	-
	{\partial^k g(x,c(y))\over\partial y^k} \, {y^{k-1}\over (k-1)!}\,,    \nonumber
	\end{align}
	where $h_n(y)\coloneqq({y^{n+1}\over n!}-{y^{n-1}\over (n-1)!})$.
	From \eqref{A2} and \eqref{A3}, we get 
	\begin{align}\label{B}
	yg(x,y)-{\partial g(x,y)\over\partial y}
	&=
	g(x,0)y
	+
	\sum_{n=1}^{k-1} {\partial^n g(x,0)\over\partial y^n} \, 
	h_n(y)
	\\[0,2cm]
	&\quad +
	{\partial^{k} g(x,b(y))\over\partial y^{k}} \, {y^{k+1}\over k!} 
	-
	{\partial^k g(x,c(y))\over\partial y^k} \, {y^{k-1}\over (k-1)!}\,.   \nonumber
	\end{align}
	
	Expanding  ${\partial f(x,y)\over \partial y}$ 
	up to $(k-1)$-th and $k$-th orders, we have 
	\begin{align}
	x{\partial f(x,y)\over \partial y}
	&=
	{\partial f(0,y)\over \partial y} x
	-
	{\partial^{k} f(0,y)\over\partial x^{k-1}\partial y} \, {x^k\over (k-1)!}
	+
	\sum_{n=1}^{k-1} {\partial^{n+1} f(0,y)\over\partial x^n\partial y} \, {x^{n+1}\over n!} \label{A4}
	\\[0,2cm]
	&\quad +
	{\partial^{k} f(\widetilde{a}(x),y)\over\partial x^{k-1}\partial y} \, {x^k\over (k-1)!}\,,  
	\nonumber 
	\\[0,2cm]
	&=
	{\partial f(0,y)\over \partial y} x
	+
	\sum_{n=1}^{k-1} {\partial^{n+1} f(0,y)\over\partial x^n\partial y} \, {x^{n+1}\over n!}
	+
	{\partial^{k+1} f(\widetilde{b}(x),y)\over\partial x^{k}\partial y} \, {x^{k+1}\over k!}.
	\label{A5}
	\end{align}
	Again, expanding ${\partial^2 f(x,y)\over\partial x\partial y}$ up to
	$(k-1)$-th orders, we obtain
	\begin{align}
	{\partial^2 f(x,y)\over\partial x\partial y}
	&\!=\!
	\sum_{n=1}^{k-1} {\partial^{n+1} f(0,y)\over\partial x^n\partial y} \, {x^{n-1}\over (n-1)!}
	+
	{\partial^{k+1} f(\widetilde{c}(x),y)\over\partial x^k\partial y} \, {x^{k-1}\over (k-1)!} \,,
	\label{A6}
	\end{align}
	where $\widetilde{a}(x), \widetilde{b}(x), \widetilde{c}(x)$ are some functions 
	depending only on $x$.
	From \eqref{A4} and \eqref{A6},
	\begin{align}\label{C}
	x{\partial f(x,y)\over \partial y}
	-
	{\partial^2 f(x,y)\over\partial x\partial y}
	&\!=\!
	{\partial f(0,y)\over \partial y} x
	-\!
	{\partial^{k} f(0,y)\over\partial x^{k-1}\partial y} \, {x^k\over (k-1)!}
	\!+\!
	\sum_{n=1}^{k-1} {\partial^{n+1} f(0,y)\over\partial x^n\partial y} \, 
	h_n(x)
	\\[0,2cm]
	&\quad
	+
	{\partial^{k} f(\widetilde{a}(x),y)\over\partial x^{k-1}\partial y} \, {x^k\over (k-1)!}
	-
	{\partial^{k+1} f(\widetilde{c}(x),y)\over\partial x^k\partial y} \, {x^{k-1}\over (k-1)!} \,.
	\nonumber
	\end{align}
	From \eqref{A5} and \eqref{A6} we have
	\begin{align}\label{D}
	x{\partial f(x,y)\over \partial y}
	-
	{\partial^2 f(x,y)\over\partial x\partial y}
	&=
	{\partial f(0,y)\over \partial y} x
	+
	\sum_{n=1}^{k-1} {\partial^{n+1} f(0,y)\over\partial x^n\partial y} \, 
	h_n(x)
	\\[0,2cm]
	&\quad +
	{\partial^{k+1} f(\widetilde{b}(x),y)\over\partial x^{k}\partial y} \, {x^{k+1}\over k!}
	-
	{\partial^{k+1} f(\widetilde{c}(x),y)\over\partial x^k\partial y} \, {x^{k-1}\over (k-1)!}\,.
	\nonumber
	\end{align}
	Summing \eqref{A} and \eqref{C} one get that
	\begin{align}\label{E}
	xyf(x,y)
	-
	{\partial^2 f(x,y)\over\partial x\partial y}
	&=
	f(x,0)xy
	+
	{\partial f(0,y)\over \partial y} x
	-
	{\partial^{k-1} f(x,0)\over\partial y^{k-1}} \, {xy^k\over (k-1)!}
	\\[0,2cm]
	&\quad  -
	{\partial^{k} f(0,y)\over\partial x^{k-1}\partial y} \, {x^k\over (k-1)!}   \nonumber  
	\\[0,2cm]
	&\quad 
	+
	\sum_{n=1}^{k-1}
	\left(
	{\partial^n f(x,0)\over\partial y^n} \, x
	h_n(y)
	+
	{\partial^{n+1} f(0,y)\over\partial x^n\partial y} \, 
	h_n(x)
	\right)     \nonumber
	\\[0,2cm]
	&\quad +
	{\partial^{k-1} f(x,a(y))\over\partial y^{k-1}} \, {xy^k\over (k-1)!}
	-
	{\partial^k f(x,c(y))\over\partial y^k} \, {xy^{k-1}\over (k-1)!}   \nonumber
	\\[0,2cm]
	&\quad
	+
	{\partial^{k} f(\widetilde{a}(x),y)\over\partial x^{k-1}\partial y} \, {x^k\over (k-1)!}
	-
	{\partial^{k+1} f(\widetilde{c}(x),y)\over\partial x^k\partial y} \, {x^{k-1}\over (k-1)!}\,. \nonumber
	\end{align}
	By adding up \eqref{B} and \eqref{D} we get 
	\begin{align}\label{F}
	xyf(x,y)
	-
	{\partial^2 f(x,y)\over\partial x\partial y}
	&=
	f(x,0)xy
	+
	{\partial f(0,y)\over \partial y} x
	\\[0,2cm]
	&\quad +
	\sum_{n=1}^{k-1} 
	\biggl(
	{\partial^n f(x,0)\over\partial y^n} \, x
	h_n(y)+{\partial^{n+1} f(0,y)\over\partial x^n\partial y}h_n(x)
	\biggr)  \nonumber
	\\[0,2cm]
	&\quad +
	{\partial^{k} f(x,b(y))\over\partial y^{k}} \, {xy^{k+1}\over k!} 
	-
	{\partial^k f(x,c(y))\over\partial y^k} \, {xy^{k-1}\over (k-1)!}  \nonumber
	\\[0,2cm]
	&\quad +
	{\partial^{k+1} f(\widetilde{b}(x),y)\over\partial x^{k}\partial y} \, {x^{k+1}\over k!}
	-
	{\partial^{k+1} f(\widetilde{c}(x),y)\over\partial x^k\partial y} \, {x^{k-1}\over (k-1)!}\,.
	\nonumber
	\end{align}
	Defining 
	\begin{align*}
	I_1=I_2&\coloneqq
	xyf(x,y)
	-
	{\partial f(x,y)\over\partial x\partial y}
	-
	f(x,0)xy
	-
	{\partial f(0,y)\over \partial y} x
	\\[0,2cm]
	&\quad -
	\sum_{n=1}^{k-1} 
	\biggl(
	{\partial^n f(x,0)\over\partial y^n} \, x
	h_n(y)+{\partial^{n+1} f(0,y)\over\partial x^n\partial y}h_n(x)
	\biggr)\,,
	\end{align*}
	we obtain from \eqref{E} and \eqref{F} the following two inequalities:
	\begin{align*}
	|I_1|
	&\leqslant
	2{\biggl\|{\partial^{k-1} f\over \partial y^{k-1}}\biggr\|_{\infty}}
	{|x||y|^k\over (k-1)!}
	+
	2{\biggl\|{\partial^{k} f\over \partial x^{k-1}\partial y}\biggr\|_{\infty}}
	{|x|^k\over (k-1)!}
	\\[0,2cm]
	&\quad +
	{\biggl\|{\partial^{k} f\over \partial y^{k}}\biggr\|_{\infty}}
	{|x||y|^{k-1}\over (k-1)!}
	+
	{\biggl\|{\partial^{k+1} f\over \partial x^{k}\partial y}\biggr\|_{\infty}}
	{|x|^{k-1}\over (k-1)!} 
	\end{align*}
	and
	\begin{align*}
	|I_2|\leqslant
	{\biggl\|{\partial^{k} f\over \partial y^{k}}\biggr\|_{\infty}}
	|x|
	\biggl({|y|^{k+1}\over k!}+{|y|^{k-1}\over (k-1)!} \biggr)
	+
	{\biggl\|{\partial^{k+1} f\over \partial x^{k}\partial y}\biggr\|_{\infty}}
	\biggl({|x|^{k+1}\over k!}+{|x|^{k-1}\over (k-1)!} \biggr)\,.
	\end{align*}
	Taking expectation of $|I_1|$ on 
	$D\coloneqq\{|x|\geqslant K_1, |y|\geqslant K_2\}$,
	with $K_1,K_2\geqslant 1$,
	for the first inequality, we get
	\begin{align}\label{eq1}
	\mathbb{E}(|I_1|:D)
	&\leqslant
	{2\over (k-1)!}\!
	\left(
	{\biggl\|{\partial^{k-1} f\over \partial y^{k-1}}\biggr\|_{\infty}}\!\!\!\!
	+
	{\biggl\|{\partial^{k} f\over \partial y^{k}}\biggr\|_{\infty}}
	\right)
	\mathbb{E}(|x|:|x|\geqslant K_1) \,
	\mathbb{E}(|y|^k:|y|\geqslant\! K_2)
	\\[0,2cm]
	&\quad
	+
	{2\over (k-1)!}
	\biggl(
	{\biggl\|{\partial^{k} f\over \partial x^{k-1}\partial y}\biggr\|_{\infty}}
	+
	{\biggl\|{\partial^{k+1} f\over \partial x^{k}\partial y}\biggr\|_{\infty}}
	\biggr)
	\mathbb{E}(|x|^k:|x|\geqslant K_2).   \nonumber
	\end{align}
	Now, by taking expectation of  $|I_2|$ on the set $\{|x|\leqslant K_1\}$
	we obtain
	\begin{align}\label{eq2}
	\mathbb{E}(|I_2|:|x|\leqslant K_1)
	&=
	\mathbb{E}(|I_2|:|x|\leqslant K_1, |y|\leqslant K_2)
	+
	\mathbb{E}(|I_2|:|x|\leqslant K_1, |y|\geqslant K_2)
	\\[0,2cm]
	&\leqslant
	{(k+1)K_1\over k!}
	\left(
	K_2
	{\biggl\|{\partial^{k} f\over \partial y^{k}}\biggr\|_{\infty}} 
	\mathbb{E}|y|^{k}
	+
	{\biggl\|{\partial^{k+1} f\over \partial x^{k}\partial y}\biggr\|_{\infty}}
	\mathbb{E}|x|^k
	\right) \nonumber
	\\[0,2cm]
	&\quad +
	{(k+1)K_1\over k!}
	\left(
	{\biggl\|{\partial^{k} f\over \partial y^{k}}\biggr\|_{\infty}}
	\mathbb{E}(|y|^{k+1}: |y|\geqslant K_2)
	+
	{\biggl\|{\partial^{k+1} f\over \partial x^{k}\partial y}\biggr\|_{\infty}}
	\mathbb{E}|x|^k
	\right) \nonumber
	\,.
	\end{align}
	Similarly, by taking expectation of  $|I_2|$ on the set $\{|y|\leqslant K_2\}$
	we see that 
	\begin{align}\label{eq3}
	&\hspace*{-0,5cm}
	\mathbb{E}(|I_2|:|y|\leqslant K_2)
	\leqslant
	{(k+1)K_1\over k!}
	\left(
	K_2
	{\biggl\|{\partial^{k} f\over \partial y^{k}}\biggr\|_{\infty}} 
	\mathbb{E}|y|^{k}
	+
	{\biggl\|{\partial^{k+1} f\over \partial x^{k}\partial y}\biggr\|_{\infty}}
	\mathbb{E}|x|^{k}
	\right)  \nonumber
	\\[0,2cm]
	\quad &\hspace*{-0,5cm}+
	{(k+1)\over k!}
	\left(
	K_2
	{\biggl\|{\partial^{k} f\over \partial y^{k}}\biggr\|_{\infty}} 
	\mathbb{E}(|x|: |x|\!\geqslant\! K_1) 
	\mathbb{E}|y|^k
	+
	{\biggl\|{\partial^{k+1} f\over \partial x^{k}\partial y}\biggr\|_{\infty}} 
	\mathbb{E}(|x|^{k+1}:|x|\!\geqslant K_1)
	\right)
	\,.
	\end{align}
	Since $x,y$ are two  random variables such
	that their first $k\geqslant 2$ moments match those of a Gaussian random variable,
	it follows that $\mathbb{E}h_n(x)=\mathbb{E}h_n(y)=0$,  $n=1,\ldots,k-1$. Then,
	\begin{align*}
	\biggl|\mathbb{E}xyf(x,y)-\mathbb{E}{\partial^2 f(x,y)\over\partial x\partial y}\biggr|
	&=
	|\mathbb{E}(I_1)|
	=
	|\mathbb{E}(I_2)|
	=
	|\mathbb{E}(I_1:D)+\mathbb{E}(I_2:D^c)|
	\\
	&\leqslant
	\mathbb{E}(|I_1|:D)
	\!+\!
	\mathbb{E}(|I_2|:|x|\leqslant K_1)
	\!+\!
	\mathbb{E}(|I_2|:|y|\leqslant K_2)\,.
	\end{align*}
	Finally, combining the above inequality 
	with \eqref{eq1}, \eqref{eq2} and \eqref{eq3}, we conclude the proof.
	
\end{proof}


\end{document}